\documentclass[letterpaper, 10 pt, conference]{ieeeconf}
\IEEEoverridecommandlockouts
\usepackage{booktabs}
\usepackage{makecell}
\usepackage{subfigure}
\usepackage{cite}
\usepackage{amsmath,amssymb,amsfonts}
\usepackage{bm}
\usepackage{optidef}
\usepackage{color}
\usepackage{mathrsfs}
\usepackage{enumerate}
\usepackage{textcomp}
\usepackage{stfloats}
\usepackage{url}
\usepackage{verbatim}
\usepackage{graphicx}
\usepackage{balance}

\usepackage[english]{babel}
\usepackage{amsthm}
\newtheorem{theorem}{Theorem}[section]
\newtheorem{definition}{Definition}[section]

\newtheorem{proposition}{Proposition}[section]
\newtheorem{remark}{Remark}[section]
\newtheorem{example}{Example}[section]
\newtheorem{corollary}{Corollary}[section]
\makeatletter
\let\NAT@parse\undefined
\makeatother
\usepackage[citebordercolor={0 1 0},linkbordercolor={0 1 1}]{hyperref}
\makeatletter
\let\oldhyper@linkurl\hyper@linkurl
\def\hyper@linkurl#1#2{%
\begingroup
\hypersetup{pdfborder={0 0 0}}%
\oldhyper@linkurl{#1}{#2}%
\endgroup}
\makeatother
\title{\LARGE \bf On the Degree of Safety: Beyond Safe or Unsafe with \\Control Barrier Functions}
\author{
Ruoyu Lin$^{1}$,
Fabio Pasqualetti$^{1}$,
and Magnus Egerstedt$^{2}$
\thanks{This work was supported in part by the U.S. Army Research Lab through ARL DCIST CRA W911NF-17-2-0181, and in part by the U.S. National Science Foundation under award CMMI-2622263.}
\thanks{$^{1}$Ruoyu Lin and Fabio Pasqualetti are with the Department of Electrical Engineering and Computer Science, University of California, Irvine, Irvine, CA 92697 USA. Email: {\tt\small \{\href{mailto:rlin10@uci.edu}{\tt\small rlin10}, \href{mailto:fabiopas@uci.edu}{\tt\small fabiopas\}@uci.edu}}}
\thanks{$^{2}$Magnus Egerstedt is with the University of North Carolina at Chapel Hill, Chapel Hill, NC 27599 USA. Email: {\tt\small \href{mailto:magnus@unc.edu}{\tt\small magnus@unc.edu}}}
}
\begin{document}
\maketitle
\thispagestyle{empty}
\pagestyle{empty}
\begin{abstract}
A valid control barrier function (CBF) certifies if its represented safe set can be rendered forward invariant, and the sign of its value indicates whether a state is safe or not, but it does not quantify a degree of safety beyond the binary indication. In this paper, we show that among valid CBFs representing the same safe set, interior values and gradients can be changed arbitrarily, so neither quantity determines a degree of safety that is independent of how the set is represented. We also show that whether a candidate CBF-based inequality constraint is feasible does not by itself quantify a degree of safety. In particular, infeasibility can occur either because the safe set is not controlled invariant or because the candidate CBF representation fails. This motivates our distinction between intrinsic and representational infeasibility. Finally, we introduce the invariance authority demand (IAD), a representation-independent degree of safety that quantifies the control authority required for controlled invariance and can be used to guide set or actuator repair.
\end{abstract}

\section{Introduction} \label{sec:intro}
Safety of a dynamical system depends jointly on the system dynamics, the admissible control inputs, and the safe set prescribed by a specific task. A fundamental challenge is to verify if the safe set is controlled invariant \cite{mitchell2005time}. After a scalar function representing the safe set is verified to be a valid control barrier function (CBF), it provides a convenient way for control synthesis, typically through a CBF-based optimization problem (CBF-OP) \cite{ames2016control}. This paper asks a different question: Beyond the binary indication of whether a state is in the safe set or not, how should the \emph{degree of safety} be quantified?

One approach could be to associate a larger CBF value, or perhaps a larger CBF gradient, with a higher degree of safety. However, the sign of a CBF only determines if a state is in the safe set, and its value and gradient depend on a specific CBF selected to represent the same set, as illustrated in Fig.~\ref{fig:Illustration}. Another approach could be to examine whether there exists an admissible control input satisfying the candidate CBF-based inequality, i.e., whether the candidate CBF-OP is feasible. However, such feasibility indicates neither if a state is in the safe set nor if the safe set is controlled invariant. For instance, a candidate CBF-OP can be feasible at states outside the safe set (see, e.g., \cite{ames2016control}) and infeasible at states in the safe set (see Example~\ref{ex:repair_representation}). In addition, it can be feasible for some intervals when the safe set is not controlled invariant (see Example~\ref{ex:repair_intrinsic}) and infeasible even when the safe set is controlled invariant (see Example~\ref{ex:repair_representation}).

A related issue is that simply plugging a $C^1$ (i.e., continuously differentiable) function into a candidate CBF-OP without verifying whether such a scalar function is a valid CBF or not can lead to unjustified claims on safety, as also highlighted in \cite{kim2026your}. Indeed, such verification is challenging in general and remains an active research area (see, e.g., \cite{robey2020learning,choi2021robust,pond2023fast,clark2024semialgebraic}). Note that this paper does not seek to resolve this challenge. Instead, our main focus is on the degree of safety, what information candidate CBF values, gradients, and feasibility actually provide about it, and how different types of infeasibility should be diagnosed and repaired.
\begin{figure}[t]
\centering
\includegraphics[scale=0.26]{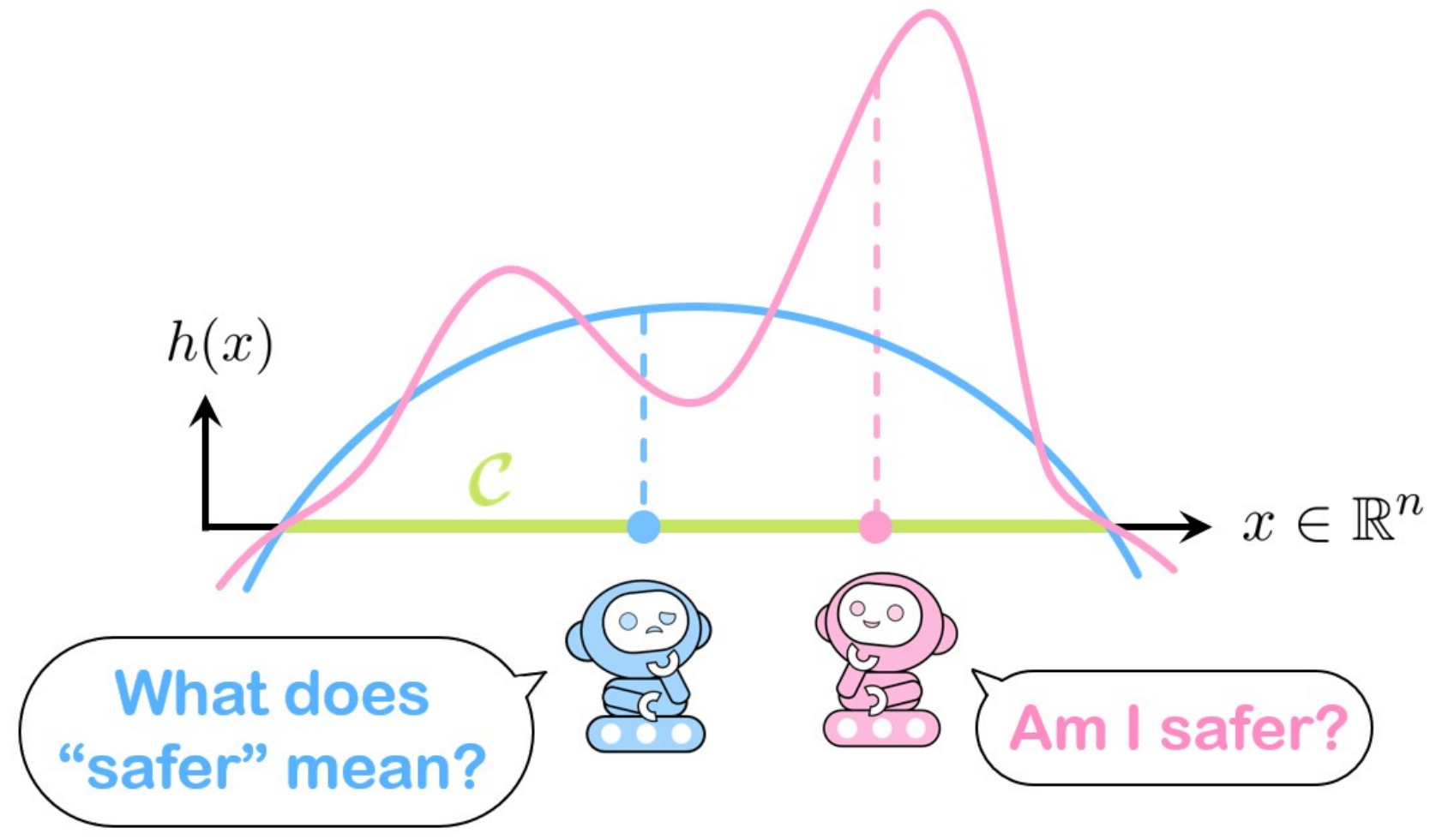}
\caption{Illustration of blue and pink CBFs representing the same safe set $\mathcal{C}$, i.e., having the same $0$-superlevel set. Under the corresponding CBFs, the blue robot's state has smaller $h(x)$ and $\|\nabla{h}(x)\|$ but is farther from $\partial \mathcal{C}$ than the pink robot's state.}
\label{fig:Illustration}
\end{figure}

The main contributions of this paper are as follows.
\begin{enumerate}
\item We prove that for a valid CBF representing a safe set, its values and gradients can be arbitrarily modified at any finite number of interior states, after which it remains a valid CBF representing the same set. Therefore, CBF values and gradients cannot by themselves quantify a representation-independent degree of safety.
\item We show that the feasibility of a candidate CBF-based inequality constraint is by itself inconclusive, and introduce a two-stage diagnosis that distinguishes intrinsic from representational infeasibility, which identifies whether the repair should modify the CBF representation, the set itself, or the underlying controlled system.
\item We introduce the invariance authority demand (IAD), a dimensionless, representation-independent quantity that characterizes controlled invariance, quantifies a control authority-based degree of safety, and guides controlled invariant set construction.
\end{enumerate}

\section{Preliminaries} \label{sec:preliminaries}
Consider the control-affine system
\begin{equation}
\label{eqn:controlaffine}
\dot{x} = F(x,u)\coloneqq f(x) + g(x)u,
\end{equation}
where $x\in \mathcal{D} \subseteq \mathbb{R}^n$ and $u \in \mathcal{U} \subseteq \mathbb{R}^m$ are the state and control input, respectively, with $\mathcal{D}$ open and $\mathcal{U}$ convex and compact, and $f: \mathcal{D} \to \mathbb{R}^n$ and $g: \mathcal{D} \to \mathbb{R}^{n\times m}$ are locally Lipschitz vector fields.
\begin{definition}
\label{def:controlledinvariance}
A set $\mathcal{A} \subseteq \mathcal{D}$ is controlled invariant if, for any $x(0) \in \mathcal{A}$, there exists an admissible control input $u$ taking values in $\mathcal{U}$ such that $x(t) \in \mathcal{A}$, $\forall t \in[0,T_{\max})$, where $[0,T_{\max})$ is the maximum interval of existence of the trajectory of \eqref{eqn:controlaffine}.
\end{definition}
\begin{definition}
\label{def:forwardinvariance}
A set $\mathcal{A} \subseteq \mathcal{D}$ is forward invariant under $u(x)$ if, for any $x(0)\in \mathcal{A}$, $x(t)\in \mathcal{A}$, $\forall t \in [0,T_{\max})$, where $[0,T_{\max})$ is the maximum interval of existence of the trajectory of \eqref{eqn:controlaffine}.
\end{definition}

Note that controlled invariance asks whether there exists a controller such that the state stays in a set, while forward invariance is a property of the closed-loop system under a particular feedback controller.

Let the safe set determined by a specific task be represented by a $C^1$ function $h:\mathcal{D} \to \mathbb{R}$ as
\begin{equation}
\label{eqn:safeset}
\mathcal{C}
\coloneqq 
\{x \in \mathcal{D} \mid h(x) \geq 0\},
\end{equation}
with $\partial \mathcal{C} \coloneqq  \{x \in \mathcal{D} \,|\, h(x) = 0\}$, $\operatorname{Int}(\mathcal{C}) \coloneqq \{x \in \mathcal{D} \,|\, h(x) > 0\} $, and $\nabla{h}(x) \neq 0$, $\forall x \in \partial\mathcal{C}$. Any such $h$ is called a representation of $\mathcal{C}$, and a state $x$ is said to be safe if $x \in \mathcal{C}$. Given an extended class $\mathcal{K}_{\infty}$ function $\alpha: \mathbb{R} \to \mathbb{R}$, we denote
\begin{equation}
\label{eqn:mu_halpha}
\mu_{h,\alpha}(x)
\coloneqq
\sup_{u \in \mathcal{U}}
\left(
\nabla{h}(x)^\top F(x,u) + \alpha(h(x))
\right),
\end{equation}
for any $x \in \mathcal{D}$. Intuitively, $\mu_{h,\alpha}(x) \geq 0$ means that some $u \in \mathcal{U}$ can prevent $h$ from decreasing faster than $-\alpha(h)$ at $x\in \mathcal{D}$. Then, using the notation of \eqref{eqn:mu_halpha}, the definition of CBF is presented below \cite{ames2016control}.
\begin{definition}
\label{def:CBF}
A $C^1$ function $h$ is a CBF on $\mathcal{D}$ if there exists an extended class $\mathcal{K}_{\infty}$ function $\alpha$ such that $\mu_{h,\alpha}(x) \geq 0$, for all $x \in \mathcal{D}$.
\end{definition}
\begin{theorem}
\label{thm:CBFinvariant}
If $h$ is a valid CBF per Definition~\ref{def:CBF}, then any locally Lipschitz controller $u(x)$ satisfying the CBF-based inequality $\nabla{h}(x)^\top F(x,u(x)) + \alpha(h(x)) \geq 0$ renders the safe set $\mathcal{C}$ forward invariant.
\end{theorem}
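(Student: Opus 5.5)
The plan is the standard comparison-lemma argument applied along closed-loop trajectories. Fix a locally Lipschitz controller $u(x)$ with $u(x)\in\mathcal{U}$ satisfying $\nabla h(x)^\top F(x,u(x)) + \alpha(h(x)) \geq 0$ on $\mathcal{D}$. Such a controller exists pointwise because $\mathcal{U}$ is compact and $\mu_{h,\alpha}\ge 0$, so the supremum in \eqref{eqn:mu_halpha} is attained; the theorem, however, is only about controllers that are given.

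\textbf{Closed-loop trajectory.} The closed-loop vector field $x\mapsto f(x)+g(x)u(x)$ is locally Lipschitz on the open set $\mathcal{D}$. Hence for any $x(0)\in\mathcal{C}$ there is a unique solution on a maximal interval $[0,T_{\max})$.

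\textbf{Scalar differential inequality.} Define $\eta(t)\coloneqq h(x(t))$, which is $C^1$ in $t$ because $h$ is $C^1$. By the chain rule and the constraint,
\begin{equation*}
\dot\eta(t)=\nabla h(x(t))^\top F(x(t),u(x(t)))\ \geq\ -\alpha(\eta(t)),\qquad \eta(0)\geq 0 .
\end{equation*}

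\textbf{Comparison.} Compare $\eta$ with the solution $y$ of $\dot y=-\alpha(y)$, $y(0)=\eta(0)\ge 0$. Since $\alpha(0)=0$ and $\alpha$ is increasing, $y\equiv 0$ is an equilibrium. Solutions starting at $y(0)\ge 0$ therefore cannot cross below zero, so they stay nonnegative.

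I expect this step to be the main obstacle. $\alpha$ is only an extended class $\mathcal{K}_\infty$ function and need not be locally Lipschitz, so uniqueness for the comparison ODE and the textbook comparison lemma may fail. To avoid this, I would argue directly by contradiction rather than through a comparison solution:
\begin{itemize}
\item Suppose $\eta(t_1)<0$ for some $t_1\in(0,T_{\max})$.
\item Let $t_0\coloneqq\sup\{t\in[0,t_1]:\eta(t)\ge 0\}$. By continuity $\eta(t_0)=0$, and $\eta<0$ on $(t_0,t_1]$.
\item On $(t_0,t_1]$ we have $-\alpha(\eta(t))>0$ because $\alpha$ is strictly increasing with $\alpha(0)=0$. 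Hence $\dot\eta>0$ there.
\item Then $\eta$ is increasing on $(t_0,t_1]$, and by the mean value theorem $\eta(t_1)-\eta(t_0)=\dot\eta(\xi)(t_1-t_0)>0$ for some $\xi\in(t_0,t_1)$. This gives $\eta(t_1)>0$, a contradiction.
\end{itemize}
This avoids any Lipschitz assumption on $\alpha$ and uses only its sign structure.

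\textbf{Conclusion.} Hence $h(x(t))\ge 0$, i.e.\ $x(t)\in\mathcal{C}$, for all $t\in[0,T_{\max})$. Since $x(0)\in\mathcal{C}$ was arbitrary, $\mathcal{C}$ is forward invariant per Definition~\ref{def:forwardinvariance}.

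The condition $\nabla h\neq 0$ on $\partial\mathcal{C}$ is not needed for this direction. It becomes relevant only when relating the result to tangent-cone or Nagumo-type characterizations.
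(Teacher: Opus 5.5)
Your proof is correct. Your choice of $t_0$ gives $\eta(t_0)=0$ with $\eta<0$ on $(t_0,t_1]$. There the constraint forces $\dot\eta\geq-\alpha(\eta)>0$, and the mean value theorem yields the contradiction.

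The paper does not prove Theorem~\ref{thm:CBFinvariant}; it quotes it from \cite{ames2016control}. So there is no in-paper argument to match, and the useful comparison is with the two standard proofs:
\begin{itemize}
\item the comparison lemma applied to $\dot\eta\geq-\alpha(\eta)$;
\item Nagumo's theorem applied on $\partial\mathcal{C}$, where the constraint reduces to $\nabla h(x)^\top F(x,u(x))\geq-\alpha(0)=0$, and the standing assumption $\nabla h\neq 0$ on $\partial\mathcal{C}$ identifies this half-space with the tangent cone of $\mathcal{C}$.
\end{itemize}

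Your route is more elementary than either. It needs no Lipschitz property of $\alpha$, which the textbook comparison lemma would use. It needs no regularity of $\partial\mathcal{C}$. It does not actually use uniqueness of closed-loop solutions either, since the contradiction runs along any solution. What it uses instead is the constraint at points of $\mathcal{D}\setminus\mathcal{C}$. This is legitimate, because Definition~\ref{def:CBF} and the controller condition are posed on all of $\mathcal{D}$.

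The Nagumo route needs the constraint only on $\partial\mathcal{C}$. The price is the regularity assumption and uniqueness. Its advantage is that it ties directly to the boundary quantity $J_{\mathcal{C}}$ that the paper builds on.

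Your closing remark that $\nabla h\neq0$ on $\partial\mathcal{C}$ is not needed is therefore right, but only because you exploit the exterior part of the constraint. If the inequality were imposed only on $\mathcal{C}$, forward invariance can fail without that regularity. For example, take $h(x)=x^3$, $\dot x=-1$, and $\alpha(s)=3\operatorname{sign}(s)|s|^{2/3}$. Equality $\dot h+\alpha(h)=0$ holds on $\mathcal{C}=[0,\infty)$, yet the solution from $x(0)=0$ leaves $\mathcal{C}$ immediately.

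Your heuristic comparison paragraph can simply be dropped, since the contradiction argument carries the proof on its own.
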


After verifying a $C^1$ function $h$ is a valid CBF per Definition~\ref{def:CBF}, Theorem~\ref{thm:CBFinvariant} suggests an efficient way of synthesizing controllers via a CBF-OP \cite{ames2019control}, e.g.,
\begin{equation}
\label{eqn:CBFOP}
\begin{aligned}
&\underset{u\in \mathcal{U}}{\operatorname{arg\, min}}
\quad
\|u- u_{\operatorname{nom}}\|^2
\\
&\quad\quad\text{s.t.}
\quad
\nabla{h}(x)^\top F(x,u)
\geq
-\alpha(h(x)),
\end{aligned}
\end{equation}
where $u_{\operatorname{nom}} \in \mathbb{R}^m$ is the nominal control input. 

If the $C^1$ function $h$ used in \eqref{eqn:CBFOP} is only a candidate CBF that has not been verified to satisfy Definition~\ref{def:CBF} for all $x \in \mathcal{D}$, we refer to \eqref{eqn:CBFOP} as a candidate CBF-OP and its constraint as a candidate CBF-based inequality constraint.

For a selected pair of $h$ and $\alpha$, we denote the worst-case value of $\mu_{h,\alpha}$ over $\mathcal C$ as
\begin{equation}
\label{eqn:J_halpha}
J_{h,\alpha}
\coloneqq
\inf_{x\in\mathcal{C}}
\mu_{h,\alpha}(x).
\end{equation}
Then, $J_{h,\alpha} \geq 0$ implies that the selected scalar function $h$, with $\alpha$, is a valid CBF on $\mathcal{C}$. If $J_{h,\alpha} < 0$, the selected pair of $h$ and $\alpha$ results in infeasibility of the corresponding candidate CBF-OP somewhere in $\mathcal{C}$, which, however, does not imply that $\mathcal{C}$ is not controlled invariant.

To characterize controlled invariance, define
\begin{equation}
\label{eqn:mu_C}
\mu_{\mathcal{C}}(x)
\coloneqq
\sup_{u\in \mathcal{U}}
n_{\mathcal{C}}(x)^\top F(x,u),
\end{equation}
for any $x \in \partial\mathcal{C}$, where $n_{\mathcal{C}}(x) \coloneqq \nabla{h}(x)/\|\nabla{h}(x)\|$, and
\begin{equation}
\label{eqn:J_C}
J_{\mathcal{C}}
\coloneqq
\inf_{x \in \partial\mathcal{C}}
\mu_{\mathcal{C}}(x).
\end{equation}

\noindent Intuitively, $\mu_{\mathcal{C}}(x) \geq 0$ means that some $u \in \mathcal{U}$ can prevent the state from moving outside $\mathcal{C}$ at $x \in \partial\mathcal{C}$. Requiring this over the entire boundary, i.e., $J_{\mathcal C}\geq 0$, is equivalent to controlled invariance by Nagumo's theorem below \cite{nagumo1942lage}.
\begin{theorem}
\label{thm:Nagumo}
$\mathcal{C}\text{ is controlled invariant}
\;\Longleftrightarrow\;
J_{\mathcal{C}} \geq 0$.
\end{theorem}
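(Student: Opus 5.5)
We prove the two implications separately. Throughout we use the regularity built into the definition of $\mathcal{C}$: $\nabla h(x)\neq 0$ on $\partial\mathcal{C}$. Hence near each boundary point, $\mathcal{C}$ is a closed set with $C^1$ boundary. Its Bouligand tangent cone at $x\in\partial\mathcal{C}$ is the closed half-space
\[
T_{\mathcal{C}}(x)=\{v\mid n_{\mathcal{C}}(x)^\top v\ge 0\},
\]
and at interior points it is all of $\mathbb{R}^n$. So the statement is a controlled version of Nagumo's theorem: $J_{\mathcal{C}}\ge 0$ says that for every boundary point some admissible velocity $F(x,u)$ lies in the tangent cone. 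Because $\mathcal{U}$ is compact and $F(x,\cdot)$ is affine and continuous, the supremum in \eqref{eqn:mu_C} is attained, which we use repeatedly.

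\emph{Necessity ($\Rightarrow$).} We argue by contradiction. Suppose $J_{\mathcal{C}}<0$.
\begin{itemize}
\item First, we show that $\mu_{\mathcal{C}}(x_0)<0$ at some $x_0\in\partial\mathcal{C}$. If the infimum is attained this is immediate. Otherwise, since the infimum is negative, some boundary point already has a negative value.
\item By attainment and continuity of $(x,u)\mapsto n_{\mathcal{C}}(x)^\top F(x,u)$, together with compactness of $\mathcal{U}$, there are a neighborhood $N$ of $x_0$ and $\epsilon>0$ such that $\nabla h(x)^\top F(x,u)\le-\epsilon$ for all $x\in N$ and $u\in\mathcal{U}$. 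This uses that $\nabla h$ is continuous and nonzero near $x_0$.
\item Take $x(0)=x_0$ and any admissible measurable input. While the trajectory stays in $N$, we get $h(x(t))\le -\epsilon t<0$ for small $t>0$.
\item The state therefore leaves $\mathcal{C}$, contradicting controlled invariance.
\end{itemize}

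\emph{Sufficiency ($\Leftarrow$).} This is the main obstacle, because the pointwise condition gives only a set-valued map of good inputs, not a regular feedback law. We plan to build the solution through a differential inclusion.
\begin{itemize}
\item Define $K(x)=\{u\in\mathcal{U}\mid F(x,u)\in T_{\mathcal{C}}(x)\}$ for $x\in\mathcal{C}$. This set is nonempty by $J_{\mathcal{C}}\ge0$, and convex because $\mathcal{U}$ is convex and $F(x,\cdot)$ is affine.
\item Consider the inclusion $\dot x\in\{F(x,u)\mid u\in\mathcal{U}\}$. Its right-hand side is upper semicontinuous with compact convex values.
\item Apply the viability theorem (Haddad/Aubin), the set-valued form of Nagumo's theorem. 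Since $\mathcal{C}$ is locally closed in $\mathcal{D}$ and every point admits a tangent velocity, there is a viable solution from each $x(0)\in\mathcal{C}$ on its maximal interval.
\item Recover an admissible measurable input by Filippov's measurable selection lemma. This gives $x(t)\in\mathcal{C}$ for all $t\in[0,T_{\max})$, as Definition~\ref{def:controlledinvariance} requires.
\end{itemize}

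A more elementary alternative, if the viability theorem is not to be invoked, works in three steps. First, enlarge the condition to $\mu_{\mathcal{C}}\ge -\delta$. Second, construct a locally Lipschitz $\delta$-approximate feedback near the boundary using a partition of unity. Third, pass $\delta\to0$ by Arzel\`a--Ascoli, using the local bound on $F$. The delicate point in this route is the limit: uniform limits of $\delta$-viable trajectories must remain in the closed set $\mathcal{C}$, and the limit controls must stay admissible. Convexity of $\mathcal{U}$ provides the latter through weak-$*$ compactness.
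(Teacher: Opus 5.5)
Your argument is correct, but it does not follow the paper's route: the paper gives no proof and simply attributes the equivalence to Nagumo's theorem \cite{nagumo1942lage}.

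Your necessity direction is elementary and self-contained. From $\mu_{\mathcal{C}}(x_0)<0$, compactness of $\mathcal{U}$ and continuity give a uniform bound $\nabla h(x)^\top F(x,u)\le-\epsilon$ near $x_0$. So every admissible input drives the state out of $\mathcal{C}$ immediately, and the argument uses neither convexity nor any invariance theorem.

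For sufficiency you invoke the viability theorem for differential inclusions (Haddad--Aubin) together with Filippov's selection lemma. This is in fact the appropriate tool. Nagumo's original theorem concerns a single vector field, so applying it directly would require a continuous feedback $u(x)$ with $F(x,u(x))$ in the tangent half-space at every boundary point. The non-strict condition $J_{\mathcal{C}}\ge 0$ does not provide one, because the map $x\mapsto\{u\in\mathcal{U}\mid n_{\mathcal{C}}(x)^\top F(x,u)\ge0\}$ on $\partial\mathcal{C}$ need not be lower semicontinuous, and hence need not admit a continuous selection, at points where $\mu_{\mathcal{C}}(x)=0$.

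Your route also makes explicit where the paper's standing assumptions enter:
\begin{itemize}
\item compactness of $\mathcal{U}$ makes the supremum in \eqref{eqn:mu_C} attained, so $J_{\mathcal{C}}\ge0$ yields an actual tangent velocity;
\item convexity of $\mathcal{U}$ plus the control-affine structure makes $f(x)+g(x)\mathcal{U}$ convex and compact, as the viability theorem and Filippov's lemma require;
\item $\nabla h\neq0$ on $\partial\mathcal{C}$ identifies the contingent cone with the half-space;
\item openness of $\mathcal{D}$ makes $\mathcal{C}$ locally compact.
\end{itemize}
It also delivers exactly what Definition~\ref{def:controlledinvariance} asks for: an open-loop measurable input rather than a feedback law. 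The paper's citation is more economical but leaves all of this implicit.

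There are a few minor points.
\begin{itemize}
\item The case split in your first necessity bullet is unnecessary, since a negative infimum immediately yields some $x_0$ with $\mu_{\mathcal{C}}(x_0)<0$.
\item The set $K(x)$ is never used.
\item Your ``elementary alternative'' is only a sketch, and its limit step is where the real work of the viability theorem lies. A bound $\dot h\ge-\delta$ valid only on a $\delta$-dependent strip $\{-\eta_\delta\le h\le 0\}$ gives $h(x(t))\ge-\delta t$ only for $t\le\eta_\delta/\delta$. When $\nabla h$ is merely continuous, $\eta_\delta$ can be much smaller than $\delta$, so that estimate alone does not keep the approximate trajectories near $\mathcal{C}$ on a fixed horizon.
\end{itemize}
Keep the viability-theorem route as your main argument.
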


\section{On the Degree of Safety}
\subsection{What CBF Values and Gradients Do Not Tell} \label{sec:CBFgraddonottell}
Within Section~\ref{sec:CBFgraddonottell}, $h$ is assumed to be a valid CBF representing the safe set $\mathcal{C}$ per Definition~\ref{def:CBF}. As discussed in the introduction, using CBF values as quantitative measures may suggest that larger values correspond to safer states. However, what ``safer'' means is not clear yet. One possible interpretation implicitly combines the following two implications. For any $x_A, x_B \in \mathcal{C}$,
\begin{equation}
\begin{aligned}
h(x_A) > h(x_B)
\;&\Longrightarrow\;
d_{\partial \mathcal{C}}(x_A) > d_{\partial \mathcal{C}}(x_B),\\
d_{\partial \mathcal{C}}(x_A) > d_{\partial \mathcal{C}}(x_B)
\;&\Longrightarrow\;
x_A \text{ is safer than } x_B,
\end{aligned}
\label{eqn:implicitassume}
\end{equation}
where $d_{\partial\mathcal C}(x)$ denotes the distance of $x$ to the boundary ${\partial \mathcal{C}}$. However, the first implication does not hold for different CBFs representing the same set, as illustrated in Fig.~\ref{fig:Illustration}. The second implication is not meaningful until the intended notion of the degree of safety is specified.

Distance to the boundary can quantify a geometry-based degree of safety, although it does not account for the system dynamics and input constraints. One may ask if the gradient of a CBF captures the missing information of system dynamics because $\nabla{h}$ appears in the CBF-based inequality constraint and can thus affect the closed-loop behavior. However, $\nabla{h}$ also depends on the representation of the set.
\begin{theorem}
\label{thm:interiorvalues}
Let $x_1, \ldots, x_N \in \operatorname{Int}(\mathcal{C})$ be distinct. For any collection of $c_i>0$, $w_i \in \mathbb{R}^n$, $i \in \mathcal{N} \coloneqq \{1,\ldots,N\}$, there exist a $C^1$ function $s: \mathcal{D} \to (0,\infty)$ and an open neighborhood of $\partial \mathcal{C}$, denoted as $\mathcal{S}$, such that $\widetilde{h}(x) \coloneqq s(x)h(x)$ with $\widetilde{h}(x) = h(x)$, $\forall x \in \mathcal{S}$, is also a valid CBF representing the same set $\mathcal{C}$, and
\begin{equation*}
\widetilde{h}(x_i) = c_i,
\;\;
\nabla{\widetilde{h}}(x_i) = w_i,
\;\;
\forall i \in \mathcal{N}.
\end{equation*}
\end{theorem}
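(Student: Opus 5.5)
The plan is to build $s$ as a finite sum of localized bump perturbations of the constant function $1$. Each perturbation is supported in a small ball around one $x_i$ and is affine near $x_i$, which lets us prescribe the value and gradient of $\widetilde h = sh$ there. Validity of $\widetilde h$ as a CBF is then recovered by enlarging the class $\mathcal{K}_\infty$ function on positive arguments only.

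\textbf{Construction of $s$.} Write $h_i \coloneqq h(x_i) > 0$.
Choose radii $r_i > 0$ such that:
\begin{itemize}
\item the closed balls $\overline{B}(x_i, r_i)$ are pairwise disjoint;
\item each ball is contained in $\operatorname{Int}(\mathcal{C})$;
\item the compact set $K \coloneqq \bigcup_i \overline{B}(x_i, r_i)$ satisfies $h \geq \delta > 0$ on $K$ for some $\delta$.
\end{itemize}
Let $\varphi_i$ be smooth bump functions with values in $[0,1]$, support in $B(x_i, r_i)$, and $\varphi_i \equiv 1$ near $x_i$. Define
\[
s(x) \coloneqq 1 + \sum_{i\in\mathcal{N}} \varphi_i(x)\bigl(a_i + b_i^\top (x - x_i)\bigr),
\qquad a_i \coloneqq \frac{c_i}{h_i} - 1,
\qquad b_i \coloneqq \frac{1}{h_i}\Bigl(w_i - \frac{c_i}{h_i}\nabla h(x_i)\Bigr).
\]
Since $\varphi_i \equiv 1$ near $x_i$, we have $s(x_i) = c_i/h_i$ and $\nabla s(x_i) = b_i$. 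The product rule then gives
\[
\widetilde h(x_i) = c_i,
\qquad
\nabla \widetilde h(x_i) = s(x_i)\nabla h(x_i) + h_i b_i = w_i .
\]

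\textbf{Positivity and the neighborhood $\mathcal{S}$.} Shrinking $r_i$ further if needed, impose $\|b_i\| r_i < \tfrac12 c_i/h_i$. On $B(x_i, r_i)$, $s$ is a convex combination (with weight $\varphi_i$) of $1$ and $c_i/h_i + b_i^\top(x - x_i)$. Both of these are positive, so $s > 0$ on $\mathcal{D}$.

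It follows that $\widetilde h$ is $C^1$ and has the same sign as $h$ everywhere. Hence $\widetilde h$ has the same zero-superlevel set $\mathcal{C}$, the same boundary $\partial\mathcal{C}$, and the same interior.

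Set $\mathcal{S} \coloneqq \mathcal{D} \setminus K$. This set is open, contains $\partial\mathcal{C}$, and satisfies $\widetilde h = h$ and $\nabla\widetilde h = \nabla h$ on it. In particular, $\nabla \widetilde h \neq 0$ on $\partial\mathcal{C}$.

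\textbf{Validity (the main obstacle).} The difficulty is that on $K$ the new gradient may make $\sup_u \nabla\widetilde h^\top F$ very negative, and the original $\alpha$ need not compensate for it. I handle this by modifying $\alpha$ only for positive arguments, using compactness.

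First, a lower bound for $\widetilde h$ on $K$. Since $s$ is continuous and positive on the compact set $K$, there is $\delta' > 0$ with $\widetilde h \geq \delta'$ on $K$.

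Next, an upper bound for the bad term. The map $(x,u) \mapsto \nabla\widetilde h(x)^\top F(x,u)$ is continuous on the compact set $K \times \mathcal{U}$. Hence there is $M \geq 0$ with $\sup_{u\in\mathcal{U}} \nabla\widetilde h(x)^\top F(x,u) \geq -M$ for all $x \in K$.

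Now define
\[
\widetilde\alpha(r) \coloneqq \alpha(r) + \frac{M}{\delta'}\, r \ \text{ for } r \geq 0,
\qquad
\widetilde\alpha(r) \coloneqq \alpha(r) \ \text{ for } r < 0 .
\]
This $\widetilde\alpha$ is continuous and strictly increasing, vanishes at $0$, and is unbounded in both directions, so it is an extended class $\mathcal{K}_\infty$ function.

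We check $\mu_{\widetilde h,\widetilde\alpha}(x) \geq 0$ in two cases.
\begin{itemize}
\item On $K$: we have $\widetilde\alpha(\widetilde h(x)) \geq M\widetilde h(x)/\delta' \geq M$, which gives $\mu_{\widetilde h,\widetilde\alpha}(x) \geq -M + M = 0$.
\item On $\mathcal{D}\setminus K$: here $\widetilde h = h$ and $\nabla\widetilde h = \nabla h$, and $\widetilde\alpha \geq \alpha$ pointwise. Therefore $\mu_{\widetilde h,\widetilde\alpha}(x) \geq \mu_{h,\alpha}(x) \geq 0$ by the validity of $h$.
\end{itemize}
Thus $\widetilde h$ is a valid CBF per Definition~\ref{def:CBF}, representing the same set $\mathcal{C}$ and agreeing with $h$ on $\mathcal{S}$.
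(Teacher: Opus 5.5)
Your proof is correct and follows essentially the same route as the paper. Both use bump-localized affine corrections at each $x_i$ to fix $\widetilde h(x_i)$ and $\nabla\widetilde h(x_i)$, keep $\widetilde h = h$ off the compact union of balls, and absorb a compactness bound by adding a linear term to $\alpha$ only on positive arguments, so that $\widetilde\alpha \geq \alpha$ preserves validity elsewhere. The one difference is that the paper takes $s = e^{r}$ with $r = \sum_i \beta_i(x)\bigl(a_i + b_i^\top(x-x_i)\bigr)$ built in log-space, which makes $s>0$ automatic, whereas your additive $s = 1 + \sum_i \varphi_i(\cdot)$ needs the extra condition $\|b_i\| r_i < c_i/(2h_i)$, which you correctly impose (it is consistent since $b_i$ does not depend on $r_i$).
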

\begin{proof}
Since $x_1,\ldots,x_N \in \operatorname{Int}(\mathcal{C})$ are distinct, then $\exists \, \epsilon_i > 0$ such that $\overline{\mathcal{V}}_i \subset \mathcal{B}_i$ and $\overline{\mathcal{B}}_i \subset \operatorname{Int}(\mathcal{C})$, $\forall i \in \mathcal{N}$, and $\overline{\mathcal{B}}_i \cap \overline{\mathcal{B}}_j = \varnothing$, $\forall i \neq j \in \mathcal N$, where $\overline{\mathcal{V}}_i$ and $\overline{\mathcal{B}}_i$ denote the closures of $\mathcal{V}_i \coloneqq \left\{x \in \mathcal{D} \mid \|x - x_i\| < \epsilon_i \right\}$ and $\mathcal{B}_i \coloneqq \left\{x \in \mathcal{D} \mid \|x - x_i\| < \varepsilon \epsilon_i \right\}$ with $\varepsilon>1$, respectively. For each $i\in \mathcal{N}$, there exists a $C^1$ function $\beta_i:\mathcal{D} \to [0,1]$ such that $\beta_i(x) = 1$, $\forall x \in \overline{\mathcal{V}}_i$, and $\beta_i(x) = 0$, $\forall x \in \mathcal{D} \setminus \mathcal{B}_i$. Since $x_i \in \operatorname{Int}(\mathcal{C})$, then $h(x_i)>0$, and we define $a_i \coloneqq \log\!\left({c_i}/{h(x_i)} \right)$, $b_i \coloneqq {w_i}/{c_i} - {\nabla h(x_i)}/{h(x_i)}$, and $s(x) \coloneqq e^{r(x)}$, where $r(x) \coloneqq \sum_{i=1}^{N} \beta_i(x) \left(a_i + b_i^\top (x - x_i)\right)$. Since $s(x)>0$, $\forall x \in \mathcal{D}$, then $\widetilde h(x)=s(x)h(x)$ has the same sign as $h(x)$ and represents the same set $\mathcal{C}$. Denote $\mathcal{P} \coloneqq \bigcup_{i\in \mathcal{N}} \overline{\mathcal{B}}_i$, since $\mathcal{P} \subset \operatorname{Int}(\mathcal{C})$, then $\mathcal{S} \coloneqq \mathcal{D} \setminus \bigcup_{i\in \mathcal{N}} \overline{\mathcal{B}}_i$ is an open neighborhood of $\partial\mathcal{C}$, so $\beta_i(x)=0$, $\forall x \in \mathcal{S}$, $\forall i \in \mathcal{N}$. Thus, we have $r(x) = 0$, $s(x) = 1$, and $\widetilde h(x)=h(x)$, $\forall x \in \mathcal{S}$. Moreover, since $x_i \in \mathcal{V}_i$ and $\beta_i(x) = 1$, $\forall x \in \mathcal{V}_i$, then we have $\beta_i(x_i) = 1$ and $\nabla{\beta}_i(x_i) = 0$. Additionally, since $x_i \notin \overline{\mathcal{B}}_j$, $\forall j \neq i \in \mathcal{N}$, and $\beta_j(x) = 0$, $\forall x \notin \mathcal{B}_j$, we have $\beta_j(x_i) = 0$ and $\nabla{\beta}_j(x_i) = 0$, $\forall j \neq i \in \mathcal{N}$. Hence, $r(x_i) = a_i$ and $\nabla{r}(x_i) = b_i$. As a result, $s(x_i) = e^{r(x_i)} = e^{a_i} = c_i/h(x_i)$. Therefore, $\widetilde{h}(x_i) = s(x_i)h(x_i) = c_i$ and
\begin{align*}
\nabla \widetilde{h}(x_i)
&=
s(x_i)
\left(
\nabla h(x_i) + h(x_i) \nabla{r}(x_i)
\right) \\
&=
\frac{c_i}{h(x_i)}
\left(
\nabla{h}(x_i) + h(x_i)
\left(
\frac{w_i}{c_i} - \frac{\nabla h(x_i)}{h(x_i)}
\right)
\right) \\
&= w_i, \; \forall i \in \mathcal{N}.
\end{align*}

Since $h$ is assumed to be a valid CBF Definition~\ref{def:CBF}, there exists an extended class $\mathcal{K}_{\infty}$ function $\alpha$ associated with $h$. Since $\widetilde h(x)>0$, $\forall x \in \mathcal{P} \subset \operatorname{Int}(\mathcal{C})$, we denote $\bar{h} \coloneqq \min_{x\in \mathcal{P}}\widetilde h(x)>0$. Fix any $\bar{u} \in \mathcal U$. By continuity on the compact set $\mathcal{P}$, we have that $\exists \, L \in [0,\infty)$ such that $\nabla \widetilde{h}(x)^\top F(x,\bar{u}) + \alpha(\widetilde{h}(x))\geq -L$, $\forall x \in \mathcal{P}$. Choose $\lambda \geq L/\bar{h}$ and define another extended class $\mathcal K_{\infty}$ function $\widetilde{\alpha}: \mathbb{R}\to \mathbb{R}$ such that $\widetilde{\alpha}(\xi) = \alpha(\xi)$, $\forall \xi \leq 0$, and $\widetilde{\alpha}(\xi) = \alpha(\xi) + \lambda \xi$, $\forall \xi > 0$. Thus, for any $x \in \mathcal{P}$, we have $\sup_{u\in \mathcal{U}}
\left(\nabla \widetilde{h}(x)^\top F(x,u) + \widetilde{\alpha}(\widetilde h(x))\right) \geq \nabla \widetilde{h}(x)^\top F(x,\bar u) + \alpha (\widetilde h(x)) + \lambda \widetilde{h}(x) \geq -L + \lambda \bar{h} \geq 0$. We also have $\widetilde{h}(x)=h(x)$ and $\nabla \widetilde{h}(x)=\nabla{h}(x)$, $\forall x \in \mathcal{D} \setminus \mathcal{P}$. If $h(x)\leq 0$, then $\widetilde \alpha(h(x)) = \alpha(h(x))$. If $h(x)>0$, then $\widetilde \alpha(h(x)) \geq \alpha(h(x))$. Hence, the CBF validity of $h$ implies $\sup_{u \in \mathcal{U}} \left(\nabla \widetilde{h}(x)^\top F(x,u) + \widetilde{\alpha}(\widetilde{h}(x)) \right) \geq 0$, $\forall x\in\mathcal D\setminus \mathcal{P}$. Thus, $\widetilde h$ is also a valid CBF per Definition~\ref{def:CBF}. 

As such, Theorem~\ref{thm:interiorvalues} is proved.
\end{proof}

Theorem~\ref{thm:interiorvalues} shows that, if we have a valid CBF representing the safe set $\mathcal{C}$, we can construct another valid CBF representing the same set $\mathcal{C}$ by arbitrarily manipulating the CBF values and gradients at any finite collection of interior states. Therefore, the CBF values and gradients in the interior cannot define a representation-independent degree of safety, as detailed in the following corollary.
\begin{corollary}
\label{cor:nopointwisemeasure}
Consider a pointwise measure of the degree of safety $R: \mathcal{C} \times \mathcal{U} \to \mathbb{R}$ of the form
\begin{equation*}
R(x,u) = \Phi(x,u,h(x),\nabla h(x)).
\end{equation*}
If $R(x,u)$ is required to be independent of the choice of a valid CBF $h$ representing the same set $\mathcal{C}$, then $\Phi(x,u,c,w)$ must be independent of $(c,w) \in (0,\infty) \times \mathbb{R}^n$ for every $x \in \operatorname{Int}(\mathcal{C})$ and $u\in\mathcal{U}$.
\end{corollary}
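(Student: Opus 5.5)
The corollary follows almost immediately from Theorem~\ref{thm:interiorvalues}; I will apply it with $N=1$. The one point needing care is the precise meaning of ``independent of the choice of a valid CBF''. I will read the hypothesis as follows: for every pair of valid CBFs $h,\widetilde h$ representing the same $\mathcal{C}$, we have $\Phi(x,u,h(x),\nabla h(x)) = \Phi(x,u,\widetilde h(x),\nabla\widetilde h(x))$ for all $x\in\mathcal{C}$ and $u\in\mathcal{U}$. The goal is then to show that $\Phi(x,u,c,w)=\Phi(x,u,c',w')$ for every interior $x$, every $u$, and all pairs $(c,w),(c',w')\in(0,\infty)\times\mathbb{R}^n$.

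Fix the valid CBF $h$ representing $\mathcal{C}$ assumed throughout Section~\ref{sec:CBFgraddonottell}, a point $x\in\operatorname{Int}(\mathcal{C})$, and $u\in\mathcal{U}$. Take any $(c,w)\in(0,\infty)\times\mathbb{R}^n$. Apply Theorem~\ref{thm:interiorvalues} with $N=1$, $x_1=x$, $c_1=c$, $w_1=w$. This gives a valid CBF $\widetilde h = s h$ representing the same set $\mathcal{C}$, with $\widetilde h(x)=c$ and $\nabla\widetilde h(x)=w$. The invariance hypothesis applied to the pair $h,\widetilde h$ then yields
\begin{equation*}
\Phi(x,u,c,w) = \Phi(x,u,h(x),\nabla h(x)).
\end{equation*}
The right-hand side does not depend on $(c,w)$. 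So for any two pairs $(c,w)$ and $(c',w')$, both values of $\Phi$ equal the same quantity $\Phi(x,u,h(x),\nabla h(x))$. Hence $\Phi(x,u,\cdot,\cdot)$ is constant on $(0,\infty)\times\mathbb{R}^n$.

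The restriction to $c>0$ and to interior points matches the hypotheses of Theorem~\ref{thm:interiorvalues}, which requires $x_i\in\operatorname{Int}(\mathcal{C})$ and $c_i>0$. This is also necessary: every representation has $h(x)>0$ on $\operatorname{Int}(\mathcal{C})$, so values $c\le 0$ are never attained there. On $\partial\mathcal{C}$, the value is pinned to $0$ and no freedom is claimed.

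I do not expect a real obstacle. The only delicate step is the formalization of ``independent of the choice of $h$'', which must quantify over all valid representations of $\mathcal{C}$, possibly with different class $\mathcal{K}_\infty$ functions $\alpha$. Theorem~\ref{thm:interiorvalues} allows $\widetilde\alpha\neq\alpha$, so the argument depends on the quantifier ranging over all valid CBFs rather than over CBFs sharing a fixed $\alpha$. If the intended notion instead fixed $\alpha$, I would need to go back to the proof and check whether $\lambda$ could be taken to be $0$; in general it cannot. I would therefore state the quantifier explicitly at the start of the proof.
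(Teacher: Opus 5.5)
Your proof is correct and follows the paper's argument. You apply Theorem~\ref{thm:interiorvalues} with $N=1$ at the interior point to realize an arbitrary $(c,w)\in(0,\infty)\times\mathbb{R}^n$ by a valid representation, then use representation independence to equate $\Phi(x,u,c,w)$ with $\Phi(x,u,h(x),\nabla h(x))$. Your remark that the quantifier must range over all valid CBFs, with possibly different $\alpha$ since the theorem replaces $\alpha$ by $\widetilde{\alpha}$, is a worthwhile precision that the paper leaves implicit.
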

\begin{proof}
Given a state $x \in \operatorname{Int}(\mathcal{C})$ and a control input $u \in \mathcal{U}$. By Theorem~\ref{thm:interiorvalues} with $N=1$, for any $c>0$ and $w\in\mathbb{R}^n$, there exists a valid CBF $\widetilde h$ representing the same set $\mathcal{C}$ such that $\widetilde{h}(x) = c$ and $\nabla\widetilde{h}(x) = w$. Representation independence requires $\Phi(x,u,h(x),\nabla{h}(x)) = \Phi(x,u,c,w)$, $\forall (c,w) \in (0,\infty) \times \mathbb{R}^n$. Since $(c,w)$ is arbitrary, $R(x,u) = \Phi(x,u,c,w)$ is constant with respect to $(c,w)$.
\end{proof}

Corollary~\ref{cor:nopointwisemeasure} shows that, if a measure of the degree of safety is required to be independent of the representation of the safe set, i.e., the particular choice of a valid CBF $h$ representing $\mathcal{C}$, then $h(x)$ and $\nabla h(x)$, $\forall x \in \operatorname{Int}(\mathcal{C})$, cannot provide any nontrivial information to such measure. Since $\dot h(x,u) = \nabla h(x)^\top F(x,u)$, Corollary~\ref{cor:nopointwisemeasure} also applies to measures involving the total time derivative of a CBF. Hence, quantities such as $\dot h$ and $\dot{h} + \alpha(h)$ remain representation-dependent in the interior despite incorporating the system dynamics. However, a meaningful measure of the degree of safety should be independent of a particular choice of CBF. 
\begin{remark}
Corollary~\ref{cor:nopointwisemeasure} has a direct consequence for CBF-guided reinforcement learning, which typically concerns the problem of $\max_{\pi} \mathbb{E}_{\pi,p} \left[\sum_{t=0}^{\infty}\gamma_{\mathrm{RL}}^{t}R(x_t,u_t)\right]$, where $x_t$ and $u_t$ are the state and control input at time step $t$, respectively, $\gamma_{\mathrm{RL}}\in[0,1)$ is a discount factor, and the expectation is taken over the policy distribution $\pi(\cdot \,|\, x_t)$, from which $u_t$ is sampled, and the state transition distribution $p(\cdot \,|\, x_t,u_t)$. For example, Corollary~\ref{cor:nopointwisemeasure} implies that the following two classes of reward functions
\begin{equation*}
\begin{aligned}
R_{\mathrm{val}}(x,u)
&=
r_{\mathrm{task}}(x,u)
+
\lambda_{\mathrm{RL}} H_{\mathrm{val}}(h(x)),\\
R_{\mathrm{dyn}}(x,u)
&=
r_{\mathrm{task}}(x,u)
+
\lambda_{\mathrm{RL}} H_{\mathrm{dyn}}
\big(h(x),\dot h(x,u)\big),
\end{aligned}
\end{equation*}
where $r_{\mathrm{task}}:\mathcal{C} \times \mathcal{U} \to \mathbb{R}$ is the task reward, $\lambda_{\mathrm{RL}}>0$ is a weighting coefficient, $H_{\mathrm{val}}:\mathbb{R} \to \mathbb{R}$, and $H_{\mathrm{dyn}}:\mathbb{R}^2 \to \mathbb{R}$, do not encode a representation-independent degree of safety. For instance, if $H_{\mathrm{val}}$ is strictly increasing, a larger reward may be assigned to a state closer to $\partial\mathcal{C}$ than to one farther away. When distance to the boundary is taken as a geometry-based degree of safety, such reward shaping may favor less safe states. Similarly, including $\dot h$ in $H_{\mathrm{dyn}}$ does not remove the representation dependence per Corollary~\ref{cor:nopointwisemeasure}. Since reinforcement learning is not the focus of this paper, we do not further investigate it here. Our perspective is that considering the degree of safety can be useful when designing CBF-guided reward shaping.
\end{remark}
\subsection{Intrinsic and Representational Infeasibility} \label{sec:diagnosis}
Knowing that a state is currently in the safe set does not determine if safety can be maintained under the system dynamics and input constraints. It is therefore natural to examine if the candidate CBF-OP is feasible. However, its interpretation depends on whether the infeasibility comes from the set itself or from its CBF representation, which motivates the distinction between \emph{intrinsic infeasibility} and \emph{representational infeasibility} defined as follows.
\begin{definition}
\label{def:infeasibility}
Given the vector fields $f$ and $g$, the set of admissible control inputs $\mathcal{U}$, and a safe set $\mathcal C$ represented by a $C^1$ function $h$ with an extended class $\mathcal{K}_{\infty}$ function $\alpha$, intrinsic infeasibility occurs when $J_{\mathcal{C}}<0$, and representational infeasibility occurs when $J_{\mathcal C} \geq 0$ and $J_{h,\alpha} < 0$.
\end{definition}
\begin{figure}[b]
\centering
\subfigure[]{
\begin{minipage}[b]{0.234\textwidth}
\includegraphics[width=1\textwidth]{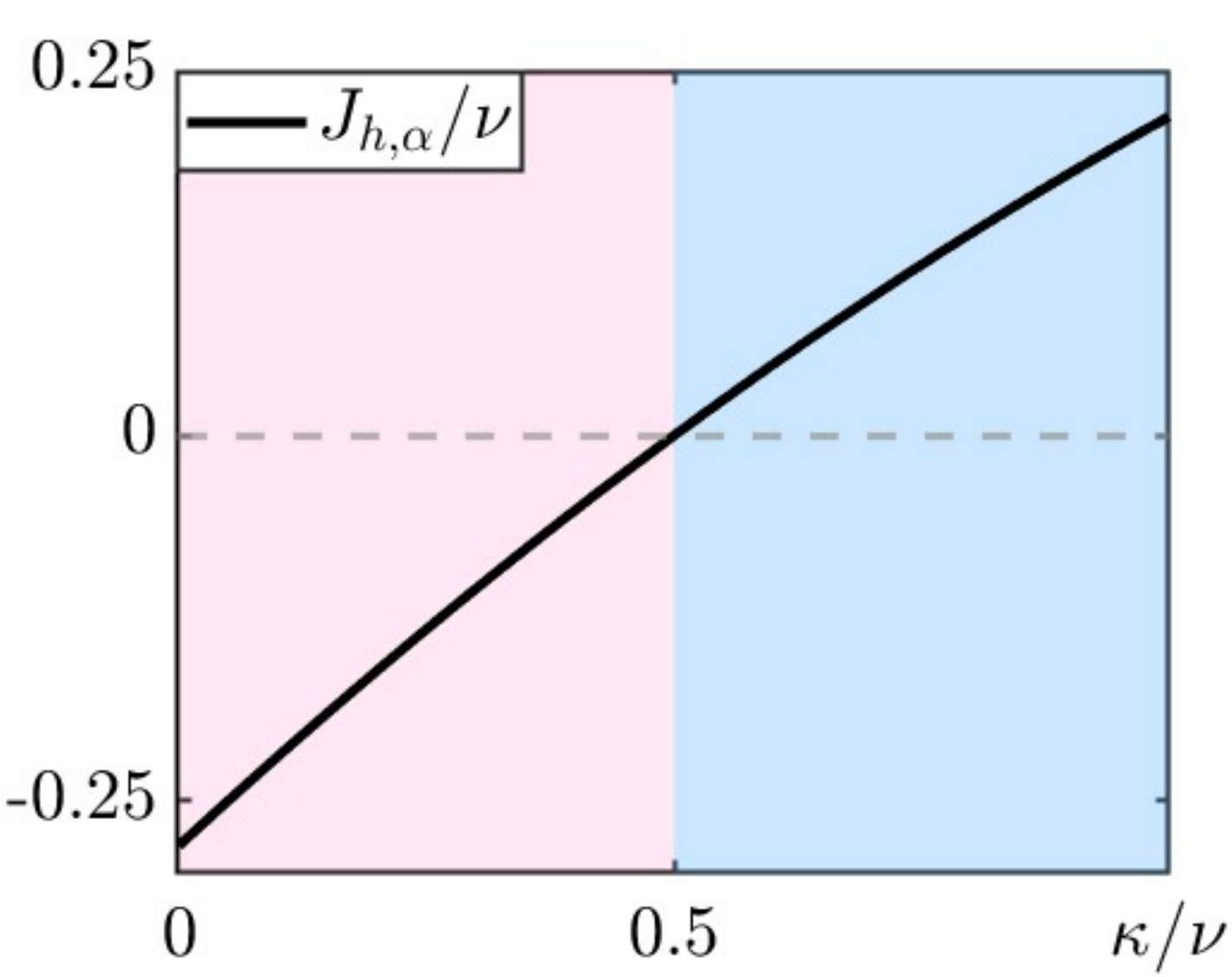}
\end{minipage}
\label{fig:repairalpha1}
}
\subfigure[]{
\begin{minipage}[b]{0.215\textwidth}
\includegraphics[width=1\textwidth]{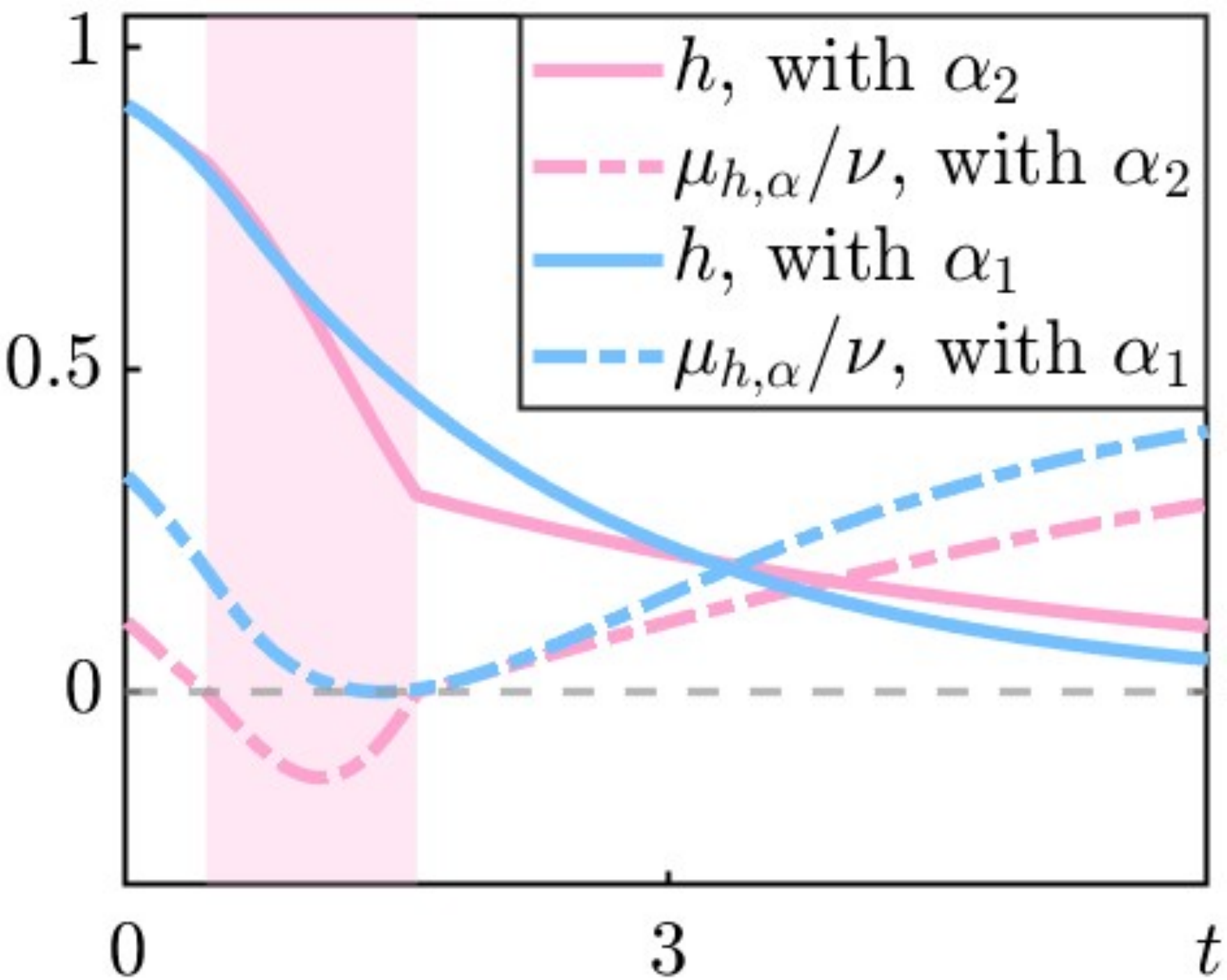}
\end{minipage}
\label{fig:repairalpha2}
}
\caption{(a) Diagnosis of representational infeasibility with the pink region indicating infeasibility while the blue region indicating feasibility; (b) Repair of representational infeasibility by changing $\alpha$, with the pink region indicating infeasibility of the corresponding OP under the bad $\alpha$.
}
\label{fig:repair_alpha}
\end{figure}

Intuitively, representational infeasibility reflects a limitation of the selected $h$ and $\alpha$, while intrinsic infeasibility reflects a limitation of the set $\mathcal{C}$ under the system dynamics $f$, $g$, and control input constraints $\mathcal{U}$.
\begin{proposition}
\label{prop:diagnosis}
\leavevmode
\begin{enumerate}[1)]
\item 
If $J_{\mathcal{C}}<0$, no $C^1$ function representing the same set $\mathcal{C}$ can be a valid CBF for any extended class $\mathcal{K}_{\infty}$ function.
\item 
If $J_{\mathcal{C}}\geq0$ but $J_{h,\alpha}<0$, the candidate CBF-based inequality constraint is infeasible somewhere in $\operatorname{Int}(\mathcal{C})$, and modifying $h$ or $\alpha$ may repair such infeasibility.
\item 
If $J_{h,\alpha}\geq0$, then $h$ is a valid CBF on $\mathcal{C}$, and any locally Lipschitz controller satisfying its CBF-based inequality renders $\mathcal{C}$ forward invariant.
\end{enumerate}
\end{proposition}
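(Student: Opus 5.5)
We treat the three parts separately. Each follows from earlier results together with a short computation relating $\mu_{h,\alpha}$ to $\mu_{\mathcal{C}}$ on the boundary.

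For part 1), we argue by contradiction. Suppose some $C^1$ function $\widetilde h$ represents $\mathcal{C}$ (so $\widetilde h=0$ exactly on $\partial\mathcal{C}$ with $\nabla\widetilde h\neq0$ there) and is a valid CBF with some extended class $\mathcal{K}_\infty$ function $\widetilde\alpha$. Take any $x\in\partial\mathcal{C}$. Then $\widetilde\alpha(\widetilde h(x))=\widetilde\alpha(0)=0$, so
\begin{equation*}
0\le\mu_{\widetilde h,\widetilde\alpha}(x)=\sup_{u\in\mathcal{U}}\nabla\widetilde h(x)^\top F(x,u).
\end{equation*}
The key step is that the normal $n_{\mathcal{C}}(x)$ does not depend on the representation. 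Since $\widetilde h$ and $h$ are both regular at $x$ with the same zero set and the same positivity region $\operatorname{Int}(\mathcal{C})$, their gradients at $x$ are both orthogonal to the level set $\partial\mathcal{C}$ and both point into $\operatorname{Int}(\mathcal{C})$. Hence $\nabla\widetilde h(x)=\kappa\nabla h(x)$ for some $\kappa>0$. I would justify this in two steps. First, the implicit function theorem makes $\partial\mathcal{C}$ a $C^1$ hypersurface near $x$, so both gradients are normal to it and therefore parallel. Second, the sign of $\kappa$ is fixed by moving along $\nabla h(x)$: for small $t>0$, $h(x+t\nabla h(x))>0$, so this point lies in $\operatorname{Int}(\mathcal{C})$ and $\widetilde h$ is positive there too. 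This forces $\nabla\widetilde h(x)^\top\nabla h(x)\ge0$, and since the gradients are parallel and nonzero, the inner product is in fact positive.

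Dividing by $\|\nabla\widetilde h(x)\|$ then gives $\mu_{\mathcal{C}}(x)\ge0$ for every $x\in\partial\mathcal{C}$, hence $J_{\mathcal{C}}\ge0$, which contradicts $J_{\mathcal{C}}<0$. Alternatively, one could cite Theorem~\ref{thm:CBFinvariant} together with Theorem~\ref{thm:Nagumo}. However, that route needs a locally Lipschitz selection, so the direct boundary computation is cleaner.

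For part 2), $J_{h,\alpha}<0$ gives some $x\in\mathcal{C}$ with $\mu_{h,\alpha}(x)<0$. The claim that $x\in\operatorname{Int}(\mathcal{C})$ requires excluding $x\in\partial\mathcal{C}$. At a boundary point, $\alpha(h(x))=0$, so $\mu_{h,\alpha}(x)=\|\nabla h(x)\|\,\mu_{\mathcal{C}}(x)\ge\|\nabla h(x)\|\,J_{\mathcal{C}}\ge0$. Hence the infeasibility occurs in the interior. Since $\mathcal{U}$ is compact and $F(x,\cdot)$ is continuous, the supremum is attained, and $\mu_{h,\alpha}(x)<0$ means no $u\in\mathcal{U}$ satisfies the inequality at that point.

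The clause ``may repair'' is existential and is not a universal claim. I would support it by observing that the failure depends only on the interior behaviour of the pair $(h,\alpha)$. Concretely, I would exhibit the repair mechanism used in the proof of Theorem~\ref{thm:interiorvalues}: on a compact interior region where $h\ge\bar h>0$, replacing $\alpha$ by $\alpha(\xi)+\lambda\xi$ for $\xi>0$ with $\lambda$ large restores $\mu\ge0$ there. I would be careful to phrase this as a possibility, since the construction fails if infeasible points accumulate at $\partial\mathcal{C}$. That case, where the infeasible points approach the boundary, is the main subtle point; I would leave it as ``may'' and point to the examples.

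Part 3) is immediate. The sign condition on $\mathcal{C}$ is exactly Definition~\ref{def:CBF} restricted to $\mathcal{C}$. Forward invariance then follows from Theorem~\ref{thm:CBFinvariant}, whose proof (via Nagumo / the comparison lemma) only uses the inequality on $\mathcal{C}$.
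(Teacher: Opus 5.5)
Your proof is correct and follows the paper's argument: the boundary identity $\mu_{h,\alpha}(x)=\|\nabla h(x)\|\,\mu_{\mathcal{C}}(x)$ for $x\in\partial\mathcal{C}$, together with Theorem~\ref{thm:CBFinvariant} for part 3). The paper also invokes Theorem~\ref{thm:Nagumo} to read $J_{\mathcal{C}}<0$ as loss of controlled invariance, which your direct boundary computation makes unnecessary, and your check that $\nabla\widetilde h(x)$ is a positive multiple of $\nabla h(x)$ on $\partial\mathcal{C}$ supplies a step the paper's one-line proof leaves implicit, namely that $n_{\mathcal{C}}$, and hence $J_{\mathcal{C}}$, does not depend on the chosen representation.
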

\begin{proof}
Since $\mu_{h,\alpha}(x)=\|\nabla h(x)\| \,\mu_{\mathcal C}(x)$, $\forall x \in \partial\mathcal{C}$, per Theorem~\ref{thm:Nagumo}, $J_{\mathcal{C}}<0$ implies that $\mathcal{C}$ is not controlled invariant so the infeasibility is intrinsic, and $J_{\mathcal{C}}\geq 0$ implies that $\mathcal C$ is controlled invariant so the infeasibility is representational. If $J_{h,\alpha}\geq0$, the selected $h$ is a valid CBF on $\mathcal{C}$ and the forward invariance conclusion follows from Theorem~\ref{thm:CBFinvariant}.
\end{proof}

As such, Proposition~\ref{prop:diagnosis} gives a two-stage diagnosis: The safe set should first be tested for intrinsic feasibility, and representational feasibility should be assessed only when the set is controlled invariant.

When $J_{\mathcal C}\geq0$ but $J_{h,\alpha}<0$, the safe set is controlled invariant but the representation of the set fails, i.e., the CBF-based inequality constraint becomes infeasible just because we may have picked a bad $h$ or $\alpha$, which can be repaired by modifying $h$ or $\alpha$, as shown in Example~\ref{ex:repair_representation}.
\begin{example}[Repair of Representational Infeasibility]
\label{ex:repair_representation}
Consider the dynamical system $\dot x=\nu x(1-x^2)+xu$, with $\nu>0$ and the input constraint as $\mathcal{U} = [-\nu/4,\nu/4]$. Choose the safe set as $\mathcal{C}=[-1,1]$. First, let $h(x)=1-x^2$ and the extended class $\mathcal{K}_{\infty}$ function be $\alpha(h) = \kappa h$ with $\kappa>0$, then $J_{\mathcal C}>0$, so $\mathcal C$ is controlled invariant. However, as seen from Fig.~\ref{fig:repairalpha1}, improper choice of $\kappa$ can lead to representational infeasibility. We compare $\alpha_{1}(h) = \nu h /2$ and $\alpha_{2}(h) = \nu h /4$. When $\alpha_1$ is used, $J_{h,\alpha}=0$, so $h$ in this case is a valid CBF per Definition~\ref{def:CBF}. In contrast, $J_{h,\alpha}<0$ when $\alpha_2$ is used, so the same controlled-invariant set exhibits representational infeasibility. In other words, changing $\alpha_2$ to $\alpha_1$ repairs the representational infeasibility, without changing $\mathcal{C}$ or $\mathcal{U}$. Notably, Fig.~\ref{fig:repairalpha2} also shows that even when the OP is infeasible for a period of time, the state can still stay in the set $\mathcal{C}$ during that interval, as discussed in Section~\ref{sec:intro}. Second, we fix $\alpha(h)={3\nu h}/{8}$. If we use the scalar function $h_1(x) = 1-x^2$, then $J_{h_1,\alpha} < 0$, so representational infeasibility occurs but due to the choice of $h$. However, changing $h_1 = 1-x^2$ to $h_2(x) = 1-x^4$ can repair this infeasibility because $J_{h_2,\alpha} > 0$.
\end{example}

Related approaches for addressing the issue of the infeasibility of candidate CBF-OPs include adaptive CBFs \cite{xiao2021adaptive}, optimal-decay CBFs \cite{zeng2021safety}, and rate-tunable CBFs \cite{parwana2025rate}.

When $J_{\mathcal C}<0$, changing only $h$ or $\alpha$ cannot repair the failure of safety certificate because the safe set is not controlled invariant. If $f$, $g$, and $\mathcal{U}$ are fixed, the repair must be done on the set itself, as shown in Example~\ref{ex:repair_intrinsic}.
\begin{example}[Repair of Intrinsic Infeasibility]
\label{ex:repair_intrinsic}
Consider the dynamical system $\dot{p}=v$, $\dot{v}=u$, where $p \coloneqq (p_x,p_y)^\top, v\coloneqq (v_x,v_y)^\top, u\coloneqq (u_x,u_y)^\top \in \mathbb{R}^2$, $\|u\|_\infty\leq u_{\max}$, and $0 \leq p_x\leq W$. 
As is common in collision-avoidance CBF design, we let the safe set be the collision-free workspace $\mathcal{C}_{\mathrm{geo}} = \{p \in \mathbb{R}^2 \,|\, 0\leq p_x\leq W\}$ and let the corresponding candidate CBFs be $h^{\ell}_{\mathrm{geo}}(p)=p_x$ and $h^r_{\mathrm{geo}}(p)=W-p_x$, which have relative degree two. Choose $\alpha(h)=\kappa h$, then the high-order CBF (HOCBF) method \cite{xiao2021high} gives two constraints $u_x + 2\kappa v_x+\kappa^2p_x \geq 0$ and $-u_x-2\kappa v_x+\kappa^2(W-p_x)\geq0$. Let $\mu^{\ell}_{\mathrm{geo}}$ denote the maximum over $\mathcal{U}$ of the left-hand side of the first constraint. We compare the trajectories obtained from the candidate HOCBF-OP with $\kappa=4$, the attempted repair that replaces $\kappa=4$ by $\tilde{\kappa}=8$ at the first loss of the OP feasibility, and the maximum braking from the beginning. Maximum braking is also applied when the corresponding OP becomes infeasible. As shown in Fig.~\ref{fig:intrinsic_repair}, the three trajectories (blue, purple, and orange) do not remain in $\mathcal{C}_{\mathrm{geo}}$ (the gray region), and changing $\kappa$ to $\tilde{\kappa}$ can only temporarily recover the OP feasibility, consistent with the fact that $\mathcal{C}_{\mathrm{geo}}$ is not controlled invariant. To repair the intrinsic infeasibility, we replace $\mathcal{C}_{\mathrm{geo}}$ by $\mathcal{C}=\{(p,v)\,|\, h^{\ell}(p,v)\geq0,\, h^r(p,v)\geq0\}$, in which $h^{\ell}(p,v)=p_x-{\max\{-v_x,0\}^2}/{2u_{\max}}$ and $h^r(p,v) = W-p_x-{\max\{v_x,0\}^2}/{2u_{\max}}$. Such $\mathcal{C}$ is controlled invariant, on which $h^{\ell}(p,v)$ and $h^r(p,v)$ are valid CBFs.
\end{example}

Example~\ref{ex:repair_intrinsic} highlights the value of the diagnosis in Proposition~\ref{prop:diagnosis}. When infeasibility occurs, before attempting to manipulate $h$ or $\alpha$, one should first determine whether the safe set is intrinsically feasible. If the safe set already contains unrecoverable states, then adjusting only the representation of the set is the wrong direction.

Related approaches for obtaining controlled invariant sets include control barrier-value functions and Hamilton-Jacobi refinement of candidate CBFs \cite{choi2021robust,tonkens2022refining}, convex computation of maximum controlled invariant sets \cite{korda2014convex}, and input-constrained or backup CBF constructions \cite{agrawal2021safe,chen2021backup}.

In general, verifying intrinsic or representational infeasibility may be as challenging as verifying whether a scalar function is a valid CBF per Definition~\ref{def:CBF}. Our purpose is to identify both the source of infeasibility and the appropriate repair direction, rather than to provide a universal verification or repair algorithm in this paper. 
\begin{figure}[t]
\centering
\subfigure[]{
\begin{minipage}[b]{0.2\textwidth}
\includegraphics[width=1\textwidth]{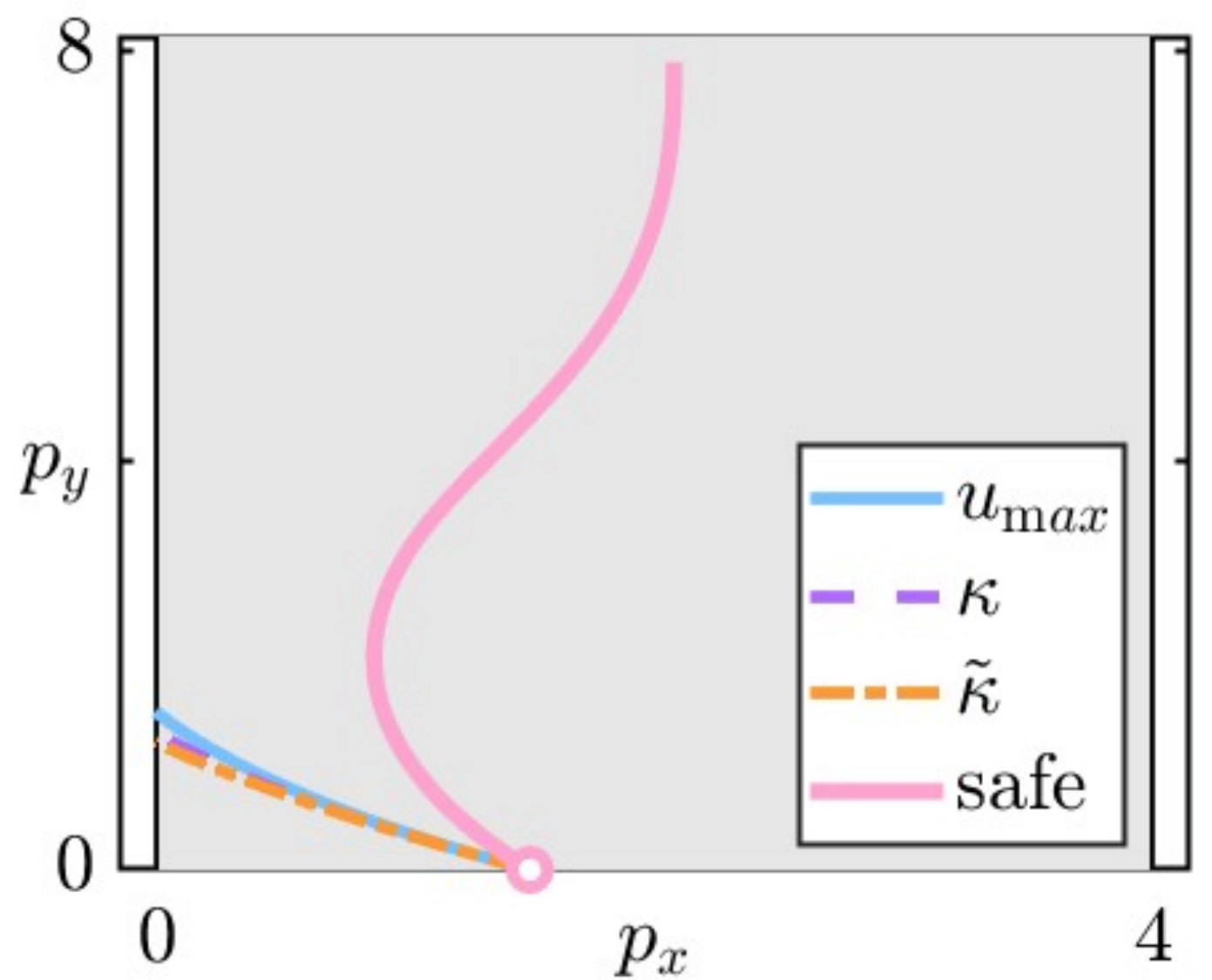}
\end{minipage}
\label{fig:intrinsic_workspace}
}
\subfigure[]{
\begin{minipage}[b]{0.2\textwidth}
\includegraphics[width=1\textwidth]{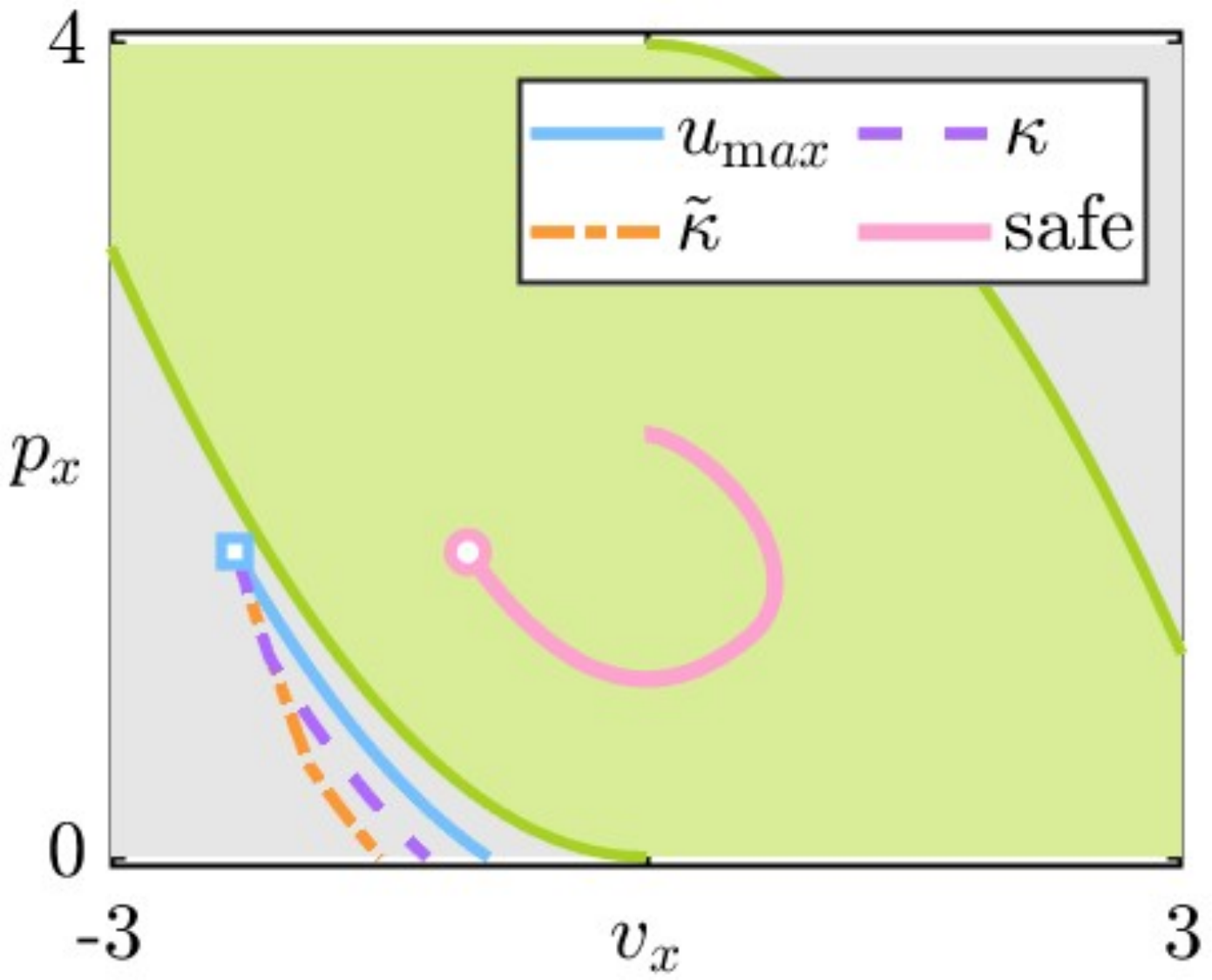}
\end{minipage}
\label{fig:intrinsic_state}
}
\subfigure[]{
\begin{minipage}[b]{0.205\textwidth}
\includegraphics[width=1\textwidth]{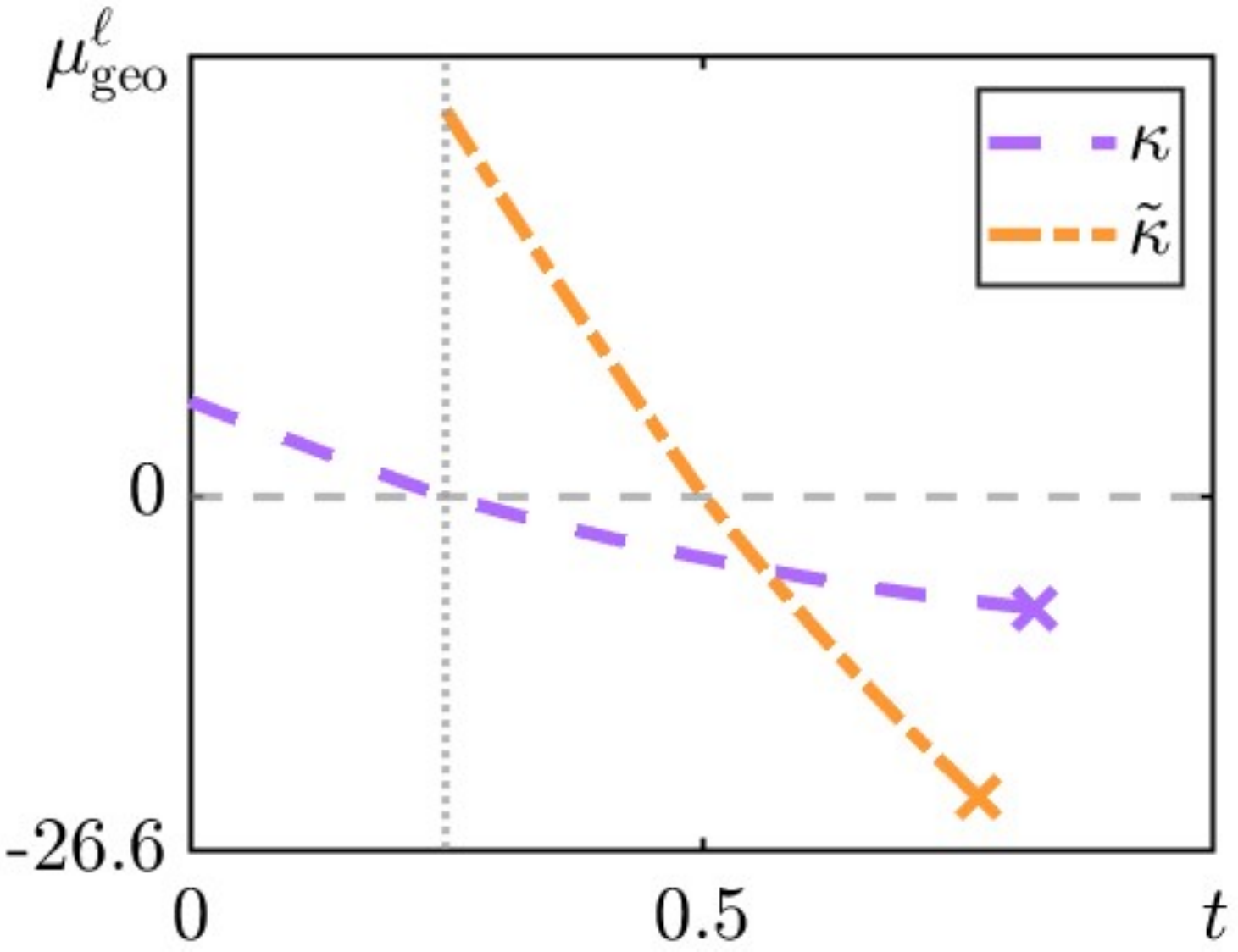}
\end{minipage}
\label{fig:intrinsic_kappa}
}
\subfigure[]{
\begin{minipage}[b]{0.2\textwidth}
\includegraphics[width=1\textwidth]{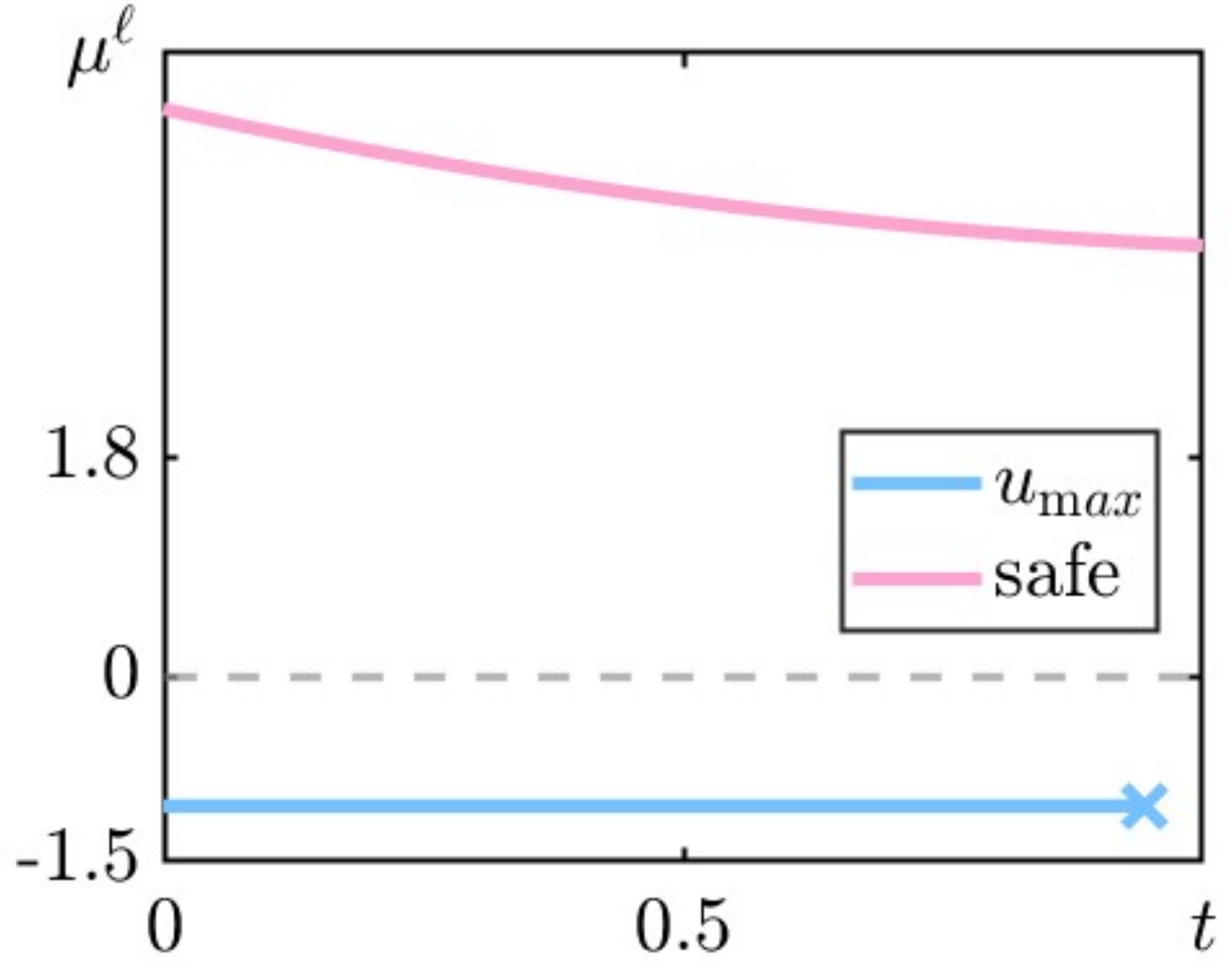}
\end{minipage}
\label{fig:intrinsic_mu}
}
\caption{(a) Trajectories in the workspace under the fixed $\kappa$, repaired $\tilde{\kappa}$, maximum braking, and safe controller; (b) The same trajectories in the $(v_x,p_x)$ space, where the green set is controlled invariant; (c) Loss of candidate HOCBF-OP feasibility with $h^{\ell}_{\mathrm{geo}}(p)=p_x$ and temporary recovery after tuning $\kappa$; (d) Intrinsic and representational feasibility (pink) under safe controller and intrinsic infeasibility even under maximum braking (blue), where $\mu^{\ell}$ is per \eqref{eqn:mu_halpha} with $h^{\ell}$. A cross indicates when the corresponding trajectory first crosses the boundary of $\mathcal{C}_{\mathrm{geo}}$.}
\label{fig:intrinsic_repair}
\end{figure}
\subsection{A Control Authority-Based Degree of Safety} \label{sec:degrees}
The results so far exclude the use of CBF values, gradients, and candidate CBF-OP feasibility to quantify the degree of safety in a representation-independent manner, since they all depend on the specific choice of CBF. However, $J_{\mathcal{C}}$, which is evaluated on $\partial \mathcal{C}$, tests the controlled invariance without a specific CBF representation of $\mathcal{C}$. This motivates a control authority-based degree of safety defined below.
\begin{definition}[Invariance Authority Demand (IAD)]
\label{def:actuationdemand}
For any $x\in\partial\mathcal{C}$, the pointwise IAD is defined as
\begin{equation}
\gamma_{\mathcal{C}}(x)
\coloneqq
\inf
\left\{
\rho\geq0
\;\bigg|\;
\sup_{u \in \rho \mathcal{U}}
n_{\mathcal{C}}(x)^\top F(x,u) \geq 0
\right\},
\label{eqn:actuationdemand}
\end{equation}
where $\rho \mathcal{U} \coloneqq \{\rho u \,|\, u\in\mathcal{U}\}$ and $0 \in \mathcal{U}$. The setwise IAD is defined as
\begin{equation}
\Gamma_{\mathcal{C}}
\coloneqq
\sup_{x\in\partial\mathcal{C}}
\gamma_{\mathcal{C}}(x).
\label{eqn:Gamma}
\end{equation}
\end{definition}

The pointwise IAD $\gamma_{\mathcal{C}}(x)$ defined in \eqref{eqn:actuationdemand} is dimensionless and measures the control authority demand, relative to the full actuator capability, required to prevent the system from instantaneously leaving the safe set $\mathcal{C}$ at $x \in \partial \mathcal{C}$. Intuitively, the geometries of $\mathcal{U}$ and $\mathcal{C}$ and the control vector field $g$ jointly determine the most effective admissible control direction for preventing the state from leaving $\mathcal{C}$, while the drift vector field $f$ determines how much control effort is required along such a direction.
\begin{proposition}
\label{prop:actuationdemand}
The pointwise IAD in \eqref{eqn:actuationdemand} satisfies
\begin{equation}
\gamma_{\mathcal{C}}(x)
=
\begin{cases}
0,
& \delta_f(x)\geq0,\\
-\dfrac{\delta_f(x)}{\delta_u(x)},
& \delta_f(x)<0,\; \delta_u(x)>0,\\
+\infty,
& \delta_f(x)<0,\; \delta_u(x)=0,
\end{cases}
\label{eqn:actuationclosed}
\end{equation}
in which $\delta_u(x) \coloneqq \sup_{u\in\mathcal{U}} n_{\mathcal{C}}(x)^\top g(x)u$ and $\delta_f(x) \coloneqq n_{\mathcal{C}}(x)^\top f(x)$. In addition,
\begin{equation}
\gamma_{\mathcal{C}}(x)\leq 1
\;\Longleftrightarrow\;
\mu_{\mathcal{C}}(x)\geq 0,
\label{eqn:actuationequivalence}
\end{equation}
\begin{equation}
\mathcal{C}\text{ is controlled invariant}
\;\Longleftrightarrow\;
\Gamma_{\mathcal{C}}\leq1.
\label{eqn:Gammainvariance}
\end{equation}
\end{proposition}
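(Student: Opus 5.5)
The plan is to reduce the supremum over $\rho\mathcal{U}$ to a scalar affine function of $\rho$. Fix $x\in\partial\mathcal{C}$. Since $F(x,u)=f(x)+g(x)u$ and $\rho\geq 0$, we have
\[
\sup_{u\in\rho\mathcal{U}} n_{\mathcal{C}}(x)^\top F(x,u)=\delta_f(x)+\sup_{v\in\mathcal{U}} n_{\mathcal{C}}(x)^\top g(x)(\rho v)=\delta_f(x)+\rho\,\delta_u(x).
\]
This identity, $\phi_x(\rho)\coloneqq\delta_f(x)+\rho\delta_u(x)$, is the key step. The case $\rho=0$ is included because $\rho\mathcal{U}=\{0\}$ and $0\in\mathcal{U}$. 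Compactness of $\mathcal{U}$ guarantees that $\delta_u(x)$ is finite, and $0\in\mathcal{U}$ gives $\delta_u(x)\geq 0$, since the choice $u=0$ yields the value $0$. So $\phi_x$ is a nondecreasing affine function of $\rho\in[0,\infty)$, and $\gamma_{\mathcal{C}}(x)=\inf\{\rho\geq0\mid\phi_x(\rho)\geq0\}$.

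The closed form \eqref{eqn:actuationclosed} then follows by the three cases.
\begin{enumerate}[1)]
\item If $\delta_f(x)\geq0$, then $\rho=0$ is feasible, so $\gamma_{\mathcal{C}}(x)=0$.
\item If $\delta_f(x)<0$ and $\delta_u(x)>0$, the feasible set is $[-\delta_f(x)/\delta_u(x),\infty)$, so the infimum is its left endpoint.
\item If $\delta_f(x)<0$ and $\delta_u(x)=0$, then $\phi_x\equiv\delta_f(x)<0$. The feasible set is empty, and by the usual convention $\inf\varnothing=+\infty$.
\end{enumerate}
These cases are exhaustive because $\delta_u(x)\geq0$.

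For \eqref{eqn:actuationequivalence}, note that $\mu_{\mathcal{C}}(x)=\phi_x(1)=\delta_f(x)+\delta_u(x)$. Suppose $\mu_{\mathcal{C}}(x)\geq0$. Then $\rho=1$ is feasible, so $\gamma_{\mathcal{C}}(x)\leq1$. Conversely, suppose $\gamma_{\mathcal{C}}(x)\leq 1$. By the closed form, either $\delta_f(x)\geq0$, which gives $\mu_{\mathcal{C}}(x)\geq\delta_f(x)\geq0$, or $\delta_u(x)>0$ with $-\delta_f(x)\leq\delta_u(x)$, which again gives $\mu_{\mathcal{C}}(x)\geq0$. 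Equivalently, the feasible set is a closed half-line, so the infimum is attained, and monotonicity of $\phi_x$ gives $\phi_x(1)\geq\phi_x(\gamma_{\mathcal{C}}(x))\geq0$.

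For \eqref{eqn:Gammainvariance}, we argue pointwise. $\Gamma_{\mathcal{C}}\leq1$ holds iff $\gamma_{\mathcal{C}}(x)\leq1$ for all $x\in\partial\mathcal{C}$. By \eqref{eqn:actuationequivalence} this holds iff $\mu_{\mathcal{C}}(x)\geq0$ for all $x\in\partial\mathcal{C}$, which is iff $J_{\mathcal{C}}\geq0$. Theorem~\ref{thm:Nagumo} then gives controlled invariance. The infimum in $J_{\mathcal{C}}$ needs no attainment, since $\inf_x\mu_{\mathcal{C}}(x)\geq0$ is equivalent to $\mu_{\mathcal{C}}(x)\geq0$ for all $x$.

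The only delicate points are the scaling identity $\sup_{\rho\mathcal{U}}=\rho\sup_{\mathcal{U}}$ (valid for $\rho\geq0$, with the degenerate case $\rho=0$ handled by $0\in\mathcal{U}$), the fact that $\delta_u\geq0$, which makes the cases exhaustive, and the convention $\inf\varnothing=+\infty$. I expect the main obstacle to be stating these carefully rather than any real difficulty.
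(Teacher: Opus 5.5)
Your proposal is correct and follows the paper's proof. Like the paper, you reduce $\sup_{u\in\rho\mathcal{U}} n_{\mathcal{C}}(x)^\top F(x,u)$ to the affine function $\delta_f(x)+\rho\,\delta_u(x)$, solve it case by case to get the closed form, and set $\rho=1$ to obtain $\mu_{\mathcal{C}}(x)=\delta_f(x)+\delta_u(x)$; you then take the supremum over $\partial\mathcal{C}$ and invoke Theorem~\ref{thm:Nagumo}. Your extra remarks make explicit details the paper leaves implicit, notably that $\delta_u(x)\geq 0$ (which is why the three cases are exhaustive), finiteness from compactness, the $\rho=0$ case, and the convention $\inf\varnothing=+\infty$.
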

\begin{proof}
Since $\sup_{u\in\rho \mathcal{U}} n_{\mathcal{C}}(x)^\top F(x,u) = \delta_f(x)+\rho\delta_u(x)$, solving $\delta_f(x) + \rho \delta_u(x) \geq 0$ gives \eqref{eqn:actuationclosed}. Letting $\rho=1$ gives $\mu_{\mathcal{C}}(x)=\delta_f(x)+\delta_u(x)$, which, together with \eqref{eqn:actuationclosed}, results in \eqref{eqn:actuationequivalence}. Taking the supremum over $\partial\mathcal{C}$ and applying Theorem~\ref{thm:Nagumo} leads to \eqref{eqn:Gammainvariance}.
\end{proof}

Both the pointwise IAD $\gamma_{\mathcal{C}}$ in \eqref{eqn:actuationdemand} and the setwise IAD $\Gamma_{\mathcal{C}}$ defined in \eqref{eqn:Gamma} depend only on the safe set $\mathcal{C}$, system dynamics $f$ and $g$, and the set of admissible control inputs $\mathcal{U}$. Thus, they are unchanged when $\mathcal{C}$ is represented by a different candidate CBF, i.e., they are both representation-independent. In addition, the geometry of $\mathcal{U}$ automatically accounts for how control authority is measured, as \eqref{eqn:actuationdemand} already encodes directional actuator capability. In particular, if
\begin{equation*}
\mathcal{U}
=
\left\{
D u \mid \|u\|_p\leq 1
\right\},
\end{equation*}
where $D\in \mathbb{R}^{m\times m}$ and $p\in [1,\infty]$, then
\begin{equation*}
\delta_u(x)
=
\left\|
D^\top g(x)^\top n_{\mathcal{C}}(x)
\right\|_q,
\end{equation*}
where $q\in[1,\infty]$ satisfies ${1}/{p}+{1}/{q}=1$. For example, if $D=\operatorname{diag}(d_1,\ldots,d_m)$ with $d_i>0$, $\forall i = 1,\ldots,m$, then a box-shaped $\mathcal{U}$ induces a weighted $L_1$ norm in $\delta_u(x)$, an ellipsoidal $\mathcal{U}$ induces a weighted $L_2$ norm in $\delta_u(x)$, and a diamond-shaped $\mathcal{U}$ induces a weighted $L_\infty$ norm in $\delta_u(x)$.

Furthermore, unlike the value of a candidate CBF $h(x)$ that only provides a binary indication of whether the state is safe or not, the setwise IAD
$\Gamma_{\mathcal{C}}$ in \eqref{eqn:Gamma} quantifies the control authority required to maintain safety, and can be used to guide the repair of intrinsic infeasibility by reshaping the set or redesigning the actuator. Specifically, $\Gamma_{\mathcal{C}} \leq 1$ indicates controlled invariance, and a smaller $\Gamma_{\mathcal{C}}$ indicates less required control authority, relative to the full actuator capability, to maintain safety, i.e., a higher degree of safety. Additionally, when $\Gamma_{\mathcal{C}} > 1$, its magnitude quantifies how much additional control authority is needed to make $\mathcal{C}$ controlled invariant, as detailed in the following corollary.
\begin{corollary}
\label{cor:authorityscaling}
If $\Gamma_{\mathcal{C}} < +\infty$, then, for any $\rho \geq 0$,
\begin{equation*}
\mathcal{C}\text{ is controlled invariant under }
u\in\rho\mathcal{U}
\;\Longleftrightarrow\;
\rho\geq\Gamma_{\mathcal{C}}.
\end{equation*}
\end{corollary}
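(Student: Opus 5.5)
The plan is to reduce the scaled problem to the unscaled one by applying Theorem~\ref{thm:Nagumo} with the admissible input set $\rho\mathcal{U}$ in place of $\mathcal{U}$. For $\rho>0$ the set $\rho\mathcal{U}$ is again convex and compact and contains $0$, so the standing assumptions behind Theorem~\ref{thm:Nagumo} still hold. Hence $\mathcal{C}$ is controlled invariant under $u\in\rho\mathcal{U}$ if and only if
\[
\inf_{x\in\partial\mathcal{C}}\sup_{u\in\rho\mathcal{U}} n_{\mathcal{C}}(x)^\top F(x,u)\geq 0 .
\]
The degenerate case $\rho=0$ gives $\rho\mathcal{U}=\{0\}$, which is still convex and compact, so the same statement applies there too. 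Next, as in the proof of Proposition~\ref{prop:actuationdemand}, I use
\[
\sup_{u\in\rho\mathcal{U}} n_{\mathcal{C}}(x)^\top F(x,u)=\delta_f(x)+\rho\,\delta_u(x),
\]
where $\delta_u(x)\geq 0$ because $0\in\mathcal{U}$. Controlled invariance under $\rho\mathcal{U}$ is therefore equivalent to $\delta_f(x)+\rho\,\delta_u(x)\geq0$ for all $x\in\partial\mathcal{C}$.

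The second step is a pointwise equivalence: for each $x\in\partial\mathcal{C}$, $\delta_f(x)+\rho\,\delta_u(x)\geq0$ holds if and only if $\rho\geq\gamma_{\mathcal{C}}(x)$. The hypothesis $\Gamma_{\mathcal{C}}<+\infty$ rules out the third case of \eqref{eqn:actuationclosed} at every boundary point, so only two cases remain.
\begin{itemize}
\item If $\delta_f(x)\geq0$, then $\gamma_{\mathcal{C}}(x)=0$, and the inequality holds for every $\rho\geq0$. Both sides of the equivalence are therefore true.
\item If $\delta_f(x)<0$ and $\delta_u(x)>0$, dividing by $\delta_u(x)$ shows the inequality is equivalent to $\rho\geq-\delta_f(x)/\delta_u(x)=\gamma_{\mathcal{C}}(x)$.
\end{itemize}
This is just the monotonicity in $\rho$ of the affine map $\rho\mapsto\delta_f(x)+\rho\,\delta_u(x)$, whose slope is nonnegative, together with the fact that the infimum in \eqref{eqn:actuationdemand} is attained.

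The final step takes the statement over the whole boundary. The condition $\rho\geq\gamma_{\mathcal{C}}(x)$ for all $x\in\partial\mathcal{C}$ is equivalent to $\rho\geq\sup_{x\in\partial\mathcal{C}}\gamma_{\mathcal{C}}(x)=\Gamma_{\mathcal{C}}$, by the definition of the supremum. The point needing care is that a supremum need not be attained, so the uniform bound must be derived pointwise rather than from any single maximizer. Since each pointwise condition is a non-strict inequality, the equivalence holds exactly at $\rho=\Gamma_{\mathcal{C}}$ with no boundary effects. I expect the main obstacle to be justifying that Theorem~\ref{thm:Nagumo} transfers to $\rho\mathcal{U}$, in particular for $\rho=0$; noting that $\{0\}$ is convex and compact should suffice.
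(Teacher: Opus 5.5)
Your proposal is correct and follows the same route as the paper. You apply Theorem~\ref{thm:Nagumo} with $\rho\mathcal{U}$ in place of $\mathcal{U}$, reduce controlled invariance to the pointwise condition $\rho\geq\gamma_{\mathcal{C}}(x)$ for all $x\in\partial\mathcal{C}$, and pass to the supremum $\Gamma_{\mathcal{C}}$. Your version also makes explicit two points that the paper's one-line proof leaves implicit. First, the affine form $\delta_f(x)+\rho\,\delta_u(x)$ with $\delta_u(x)\geq0$ makes the admissible values of $\rho$ exactly the closed ray $[\gamma_{\mathcal{C}}(x),\infty)$. Second, $\rho\mathcal{U}$, including $\{0\}$, still satisfies the standing convexity and compactness assumptions.
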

\begin{proof}
By Theorem~\ref{thm:Nagumo} and Definition~\ref{def:actuationdemand}, $\mathcal{C}$ is controlled invariant under $\rho \mathcal{U}$ when $\rho\geq\gamma_{\mathcal{C}}(x)$, $\forall x\in\partial\mathcal{C}$, which is equivalent to
$\rho\geq\Gamma_{\mathcal{C}}$ based on \eqref{eqn:Gamma}.
\end{proof}

Corollary~\ref{cor:authorityscaling} directly provides two directions for repairing intrinsic infeasibility of $\mathcal{C}$ under $f$, $g$, and $\mathcal{U}$: (i) If the safe set $\mathcal{C}$ needs to be unchanged, the available control authority must be increased, and $\Gamma_{\mathcal{C}}$ gives exactly the minimum scaling of $\mathcal{U}$ required to make $\mathcal{C}$ controlled invariant; (ii) If the actuator needs to be unchanged, the repair must be performed on the safe set itself, motivating the set repair method below.
\begin{figure}[t]
\centering
\subfigure[]{
\begin{minipage}[b]{0.2\textwidth}
\includegraphics[width=1\textwidth]{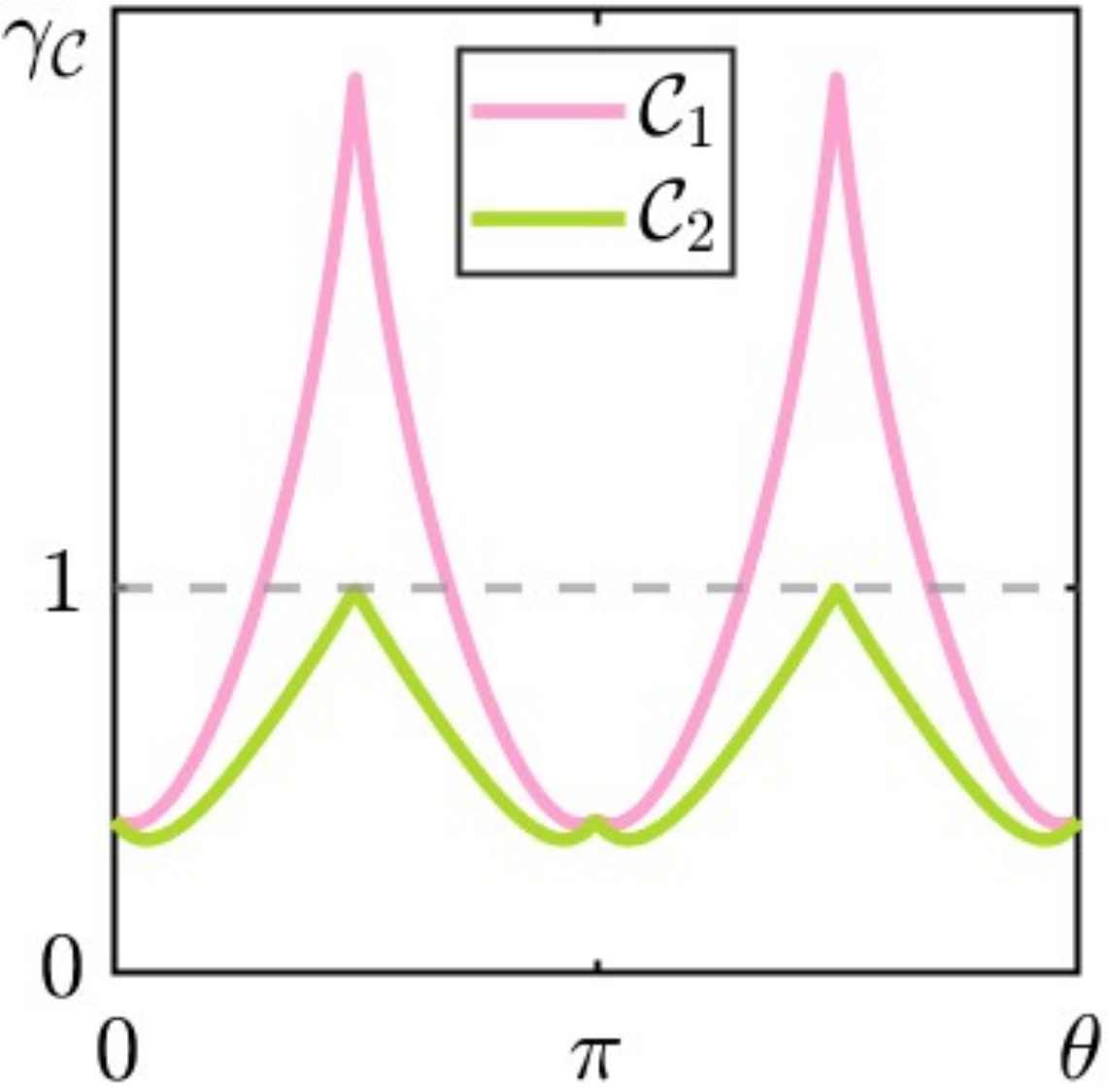}
\end{minipage}
}
\subfigure[]{
\begin{minipage}[b]{0.2\textwidth}
\includegraphics[width=1\textwidth]{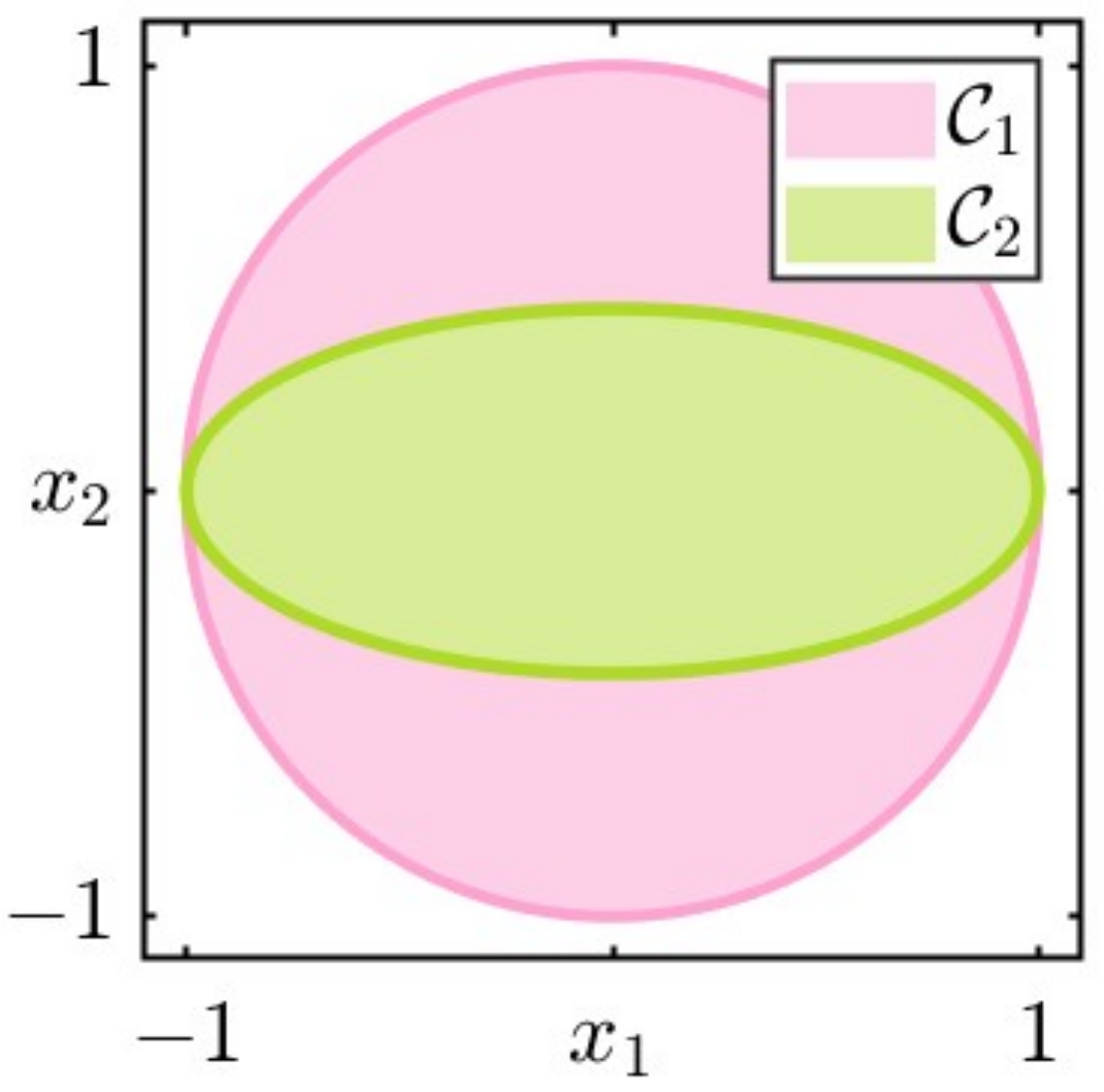}
\end{minipage}
}
\caption{IAD-guided repair from the intrinsically infeasible set $\mathcal{C}_1$ to the controlled invariant set $\mathcal{C}_2$.}
\label{fig:actuationguidedrepair}
\end{figure}

Given a set $\mathcal{A} \subseteq \mathcal{D}$, and a parameterized set family $\{\mathcal{C}_{\theta} \,|\, \theta \in \Theta\}$, let $\mathcal{Q}(\mathcal{C}_{\theta}) \in \mathbb{R}$ denote a measure depending on a specific task, such as volume. One may consider
\begin{equation}
\begin{aligned}
\sup_{\theta\in\Theta}\quad
&\mathcal{Q}(\mathcal{C}_{\theta})\\
\text{s.t.}\quad
&\mathcal{C}_{\theta}\subseteq\mathcal{A},\\
&\Gamma_{\mathcal{C}_{\theta}}\leq 1.
\end{aligned}
\label{eqn:setselection}
\end{equation}
\begin{example}
\label{ex:setrepair}
Consider the dynamical system $\dot{x} = \operatorname{diag}(\eta_1,\ldots,\eta_n) x + \operatorname{diag}(\zeta_1,\ldots,\zeta_n) u$ with $\eta_i \geq 0$, $\zeta_i > 0$, and $\mathcal{U} = [-1,1]^n$, and the set family $\mathcal{C}_{r} = \left\{x\in\mathbb{R}^n \,\big|\, \sum_{i=1}^{n}{x_i^2}/{r_i^2}\leq 1 \right\}$ with $r_i>0$. We have $\gamma_{\mathcal{C}_{r}}=\left(\sum_{i=1}^{n}\eta_i\omega_i^2\right) / \left(\sum_{i=1}^{n}{\zeta_i}|\omega_i|/{r_i}\right)$, where $\omega_i=x_i/r_i$, and thus $\Gamma_{\mathcal{C}_{r}} = \max_i \left({\eta_i r_i}/{\zeta_i}\right)$.
\end{example}
For Example~\ref{ex:setrepair}, with fixed maximum actuator capability, the safe set should be reduced more along directions with stronger outward drift and/or weaker control authority, while directions with more remaining control authority can be preserved or even enlarged according to the task objective. Applying \eqref{eqn:setselection} to a special case of Example~\ref{ex:setrepair} with $\eta_1 = 0.4$, $\eta_2 = 1.4$, $\zeta_1 = 1$, and $\zeta_2 = 0.6$ results in the maximum volume repair of the intrinsically infeasible set $\mathcal{C}_1$ to the controlled invariant set $\mathcal{C}_2$, as shown in Fig.~\ref{fig:actuationguidedrepair}, where the horizontal direction (with larger control authority) is preserved while the vertical direction (with smaller control authority) is contracted.

\section{Conclusion} \label{sec:conclusion}
Beyond a binary safety statement, what should ``safer'' mean in safety-critical control? We show that CBF values, CBF gradients, and candidate CBF-OP feasibility do not by themselves provide such a quantitative measure. The distinction between intrinsic and representational infeasibility further clarifies why even OP infeasibility must be interpreted carefully, and the proposed IAD provides one representation-independent example of a safety degree measure. We hope this work encourages further study of degrees of safety.

\bibliographystyle{IEEEtran}
\bibliography{references}
\end{document}